\numberwithin{equation}{section}
\def \ie {i.\,e.\,}
\def \bark{{\bar{k}}}
\def \er{\varepsilon}
\def \ka{\varkappa}
\def \H{\mathrm{H}}
\def \la{\lambda}
\renewcommand{\l}{\left}
\renewcommand{\div}{\operatorname{div}}
\newcommand{\per}{\operatorname{per}}
\renewcommand{\r}{\right}
\def \D{\Delta}
\def \O{\mathcal O}
\def \C{\mathbb{C}}
\def \c{\mathrm{C}}
\def \N{\mathbb{N}}
\def \M2{\mathrm{M}_2}
\def \R{\mathbb{R}}
\def \Z{\mathbb{Z}}
\def \H{\mathrm{H}}
\def \L{\mathrm{L}}
\def \sl2r{\mathrm{SL}(2,\R)}
\def\dd{\partial}
\newcommand{\beq}{\begin{equation}}
\newcommand{\eeq}{\end{equation}}
\newcommand{\one}{\mathbf{1}}
\renewcommand{\D}{\mathcal{D}}
\def\dom{\operatorname{Dom}}
\def\eff{\operatorname{eff}}
\def\wt{\widetilde}
\def\wt{\widetilde}
\def\ran{\operatorname{ran}}
\def\im{\operatorname{Im}}
\def\re{\operatorname{Re}}
\newcommand{\eqdef}{\stackrel{\rm def}{=\kern-3.6pt=}}
    \newcommand{\<}{\langle}
\renewcommand{\>}{\rangle}
\theoremstyle{plain}
\newtheorem{theorem}{\bf Theorem}[section]
\newtheorem{lemma}[theorem]{\bf Lemma}
\newtheorem{prop}[theorem]{\bf Proposition}
\newtheorem{cor}[theorem]{\bf Corollary}
\theoremstyle{definition}
\theoremstyle{remark}
\newtheorem{remark}[theorem]{\bf Remark}
\theoremstyle{cond}
\renewcommand{\le}{\leqslant}
\renewcommand{\ge}{\geqslant}
\newcommand{\rank}{\mathop{\mathrm{rank}}\nolimits}
\newcommand{\dist}{\mathop{\mathrm{dist}}\nolimits}
\renewcommand{\qed}{\vrule height7pt width5pt depth0pt}
\title{On the structure of band edges of 2d periodic elliptic operators}
\author[N. Filonov]{Nikolay Filonov}
\address{St.~Petersburg Department of V.~A.~Steklov Mathematical Institute,
	Fontanka 27,
 St.Petersburg, 191023, Russia, and
	St. Petersburg State University,
Universitetskaya emb. 7/9, St. Petersburg, 199034,
Russia}
\thanks{The first author was supported by RFBR Grant 16--01--00087 and by Simons Foundation.}
\author[I. Kachkovskiy]{Ilya Kachkovskiy}
\address{Department of Mathematics,
	Michigan State University,
	Wells Hall, 619 Red Cedar Road,
	East Lansing, MI, 48910,
	United States of America}
	\thanks{The second author was supported by AMS Simons Travel Grant 2014 -- 2016 and by NSF grant DMS--1758326.}
\dedicatory{To the memory of Yuri Safarov, our dear friend and colleague}
\date{}
\begin{document}
\begin{abstract}
    For a wide class of 2D periodic elliptic operators,
    we show that the global extrema of all spectral band functions are isolated.
    
\smallskip
\noindent \textbf{Keywords:} periodic Schr\"odinger operator, Bloch eigenvalues, spectral band edges, effective mass.
\end{abstract}
\maketitle
\section{Introduction}
The structure of band edges of periodic Schr\"odinger operators is
an interesting and wide open question of mathematical physics. For
example, suppose that a band function $k\mapsto E(k)$ has a
minimum (or maximum) $k_0$. In solid state physics, the {\it
tensor of effective masses} $M_{\eff}$ at $k_0$ is defined as
\beq
\label{eff_mass}
\l\{M_{\eff}^{-1}\r\}_{ij}=\pm\frac{1}{\hbar^2}
\l.\frac{\dd^2E}{\dd k_i \dd k_j}\r|_{k=k_0}
\eeq
(see \cite[Chapter 12, (12.29)]{AMe} for more details). The choice
of the sign depends on whether the extremum is a minimum (``$+$'',
the effective mass of an electron) or a maximum (``$-$'', the
effective mass of a hole). This definition of $M_{\eff}$ makes
sense only if the right hand side is invertible, \ie if the
critical point $k_0$ is non-degenerate. This is always true in one
dimension, see, for example, \cite[Section XIII.16]{RS4}. It is
commonly believed that, for $d\ge 2$, the spectral gap edges are
non-degenerate for ``generic'' potentials, see, for example,
\cite[Conjecture 5.1]{KuP} and the recent review \cite[Section 5.9.2]{Ku_review}.
However, there are very few rigorous
results in this direction. In \cite{KSi}, it is shown that the
lowest eigenvalue for the periodic Schr\"odinger operator is
non-degenerate. The same holds for the two-dimensional Pauli
operator, see \cite{Pauli}. A wide class of operators for which
the lower edge of the spectrum can be extensively analysed is
described in \cite{BSu03}. See also the survey \cite{Ku} on
photonic crystals, where additional references are given. 
For periodic {\it magnetic} Schr\"odinger operators, 
even the lowest
eigenvalue may be degenerate (i. e. the right hand side of \eqref{eff_mass} may vanish, see \cite{Sht2}). Note, however, that this
can happen only for sufficiently large magnetic potentials, as shown in
\cite{Sht3}.

Much less is known about the edges of other bands. In \cite{KL}, it is established that, for periodic operators of the form $-\Delta+V$ with generic $V$, the edge of each spectral gap is an 
extremum of only one band function, but the question of non-degeneracy of these
extrema remains open. In \cite{V}, it is shown in 2D that for any $N$ there exists a $C^{\infty}$-neighbourhood of $0$ such that, for potentials $V$ from a dense $G_{\delta}$-subset of that neighbourhood, the first $N$ band functions are Morse functions. In other words, any finite number of bands is non-degenerate for generic  $C^{\infty}$-small potentials.

In the present paper, we establish the following result (Theorem
\ref{main}): for a wide class of 2D periodic elliptic second order
operators, any global minimal or maximal value of any band function can only be attained at a discrete set of points.
 In other words, {\it the global extrema of each band function are isolated}. In particular, this implies that the level sets corresponding to spectral band edges cannot contain 1D curves. We do not need any
genericity or smallness assumptions, and our result holds for all
bands, not necessarily for the edges of the spectrum. We formulate
the results for ``smooth'' second order elliptic operators
\eqref{h_def}. We believe that, using methods from \cite{Sht1},
the result can be extended to the same generality in which the
absolute continuity of the spectrum in 2D is established. The
extension beyond dimension 2, however, seems significantly more
challenging, as our technique relies heavily on 2D specifics.

An immediate consequence of our result is that Liouville theorems
(in the sense of \cite{KuP2,KuP}) hold for the operator
\eqref{h_def} at all gap edges, see Corollary \ref{liouville}. Our
result can also be used in studying Green's function asymptotics
near spectral gap edges, see \cite{KKR,KR,Kha}, and to obtain a ``variable period'' version of the non-degeneracy conjecture in 2D \cite{PS}.

Surprisingly, the statement of the main theorem {\it fails for
discrete periodic Schr\"odinger operators on $\Z^2$}, already in
the case of a diatomic lattice. We explain
the corresponding example of non-isolated  extrema in Section 7. 
\vskip 1mm {\noindent \bf
Acknowledgements.} The results were partially obtained during the
programme Periodic and Ergodic Spectral Problems in January --
June 2015, supported by EPSRC Grant EP/K032208/1. We are grateful to the Isaac Newton Institute for
Mathematical Sciences, Cambridge, for their support and
hospitality. We would like to express our deepest thanks to Peter
Kuchment, Leonid Parnovski, and Roman Shterenberg for several
lively and fruitful discussions during the programme, and to Alexander Pushnitski for reading the draft version of the text and for valuable remarks. We are also grateful to the anonymous referee, whose suggestions significantly improved the quality of the paper and the references.
\section{Main result}
Let
$$
\Gamma=\{n_1 b_1+n_2 b_2,n_1,n_2\in \Z\}
$$
be a lattice in $\R^2$, and let $\Omega\subset \R^2$ be an elementary cell of
$\Gamma$ identified with $\R^2/\Gamma$. We will use notation such
as $\c^1_{\per}(\Omega)$, $\H^1_{\per}(\Omega)$ for the classes of
functions satisfying periodic boundary conditions.

The periodic magnetic Schr\"odinger operator with metric $g$ is
defined by the expression \beq \label{h_def}
(Hu)(x)=(-i\nabla -A(x))^{*}g(x)(-i\nabla -A(x))u(x)+V(x)u(x),
\eeq where the electric potential $V\colon \R^2\to \R$ is
$\Gamma$-periodic, \ie assumed to satisfy \beq
\label{v_cond} V(x+b_j)=V(x),\quad j=1,2,\quad V\in
{\L}^{\infty}(\Omega), \eeq and the magnetic potential $A\colon
\R^2\to \R^2$ is also $\Gamma$-periodic and \beq \label{a_cond}
\quad A\in \c_{\per}^1(\Omega;\R^2), \quad \div A=0, \quad
\int_{\Omega}A(x)\,dx=0. \eeq 
Note that the last two conditions
can be imposed without loss of generality, see Remark \ref{gauge}. The metric $g$ is a $\Gamma$-periodic
symmetric $(2\times 2)$-matrix function satisfying \beq
\label{g_cond} g\in \c_{\per}^2(\Omega;\mathrm{M_2}(\R)), \quad
g(x)\ge m_g\one>0,\quad\text{where}\,\, \one=\begin{pmatrix}
    1&0\\0&1
\end{pmatrix},
\eeq for some positive constant $m_g$. The operator \eqref{h_def}
is self-adjoint on $\L^2(\R^2)$ with the domain being
the Sobolev space $\H^2(\R^2)$. From the standard Floquet--Bloch
theory (see, for example, \cite[Section XIII.16]{RS4} or \cite[Section 4.5]{Ku_book}), it follows that
$H$ is unitarily equivalent to the direct integral \beq
\label{dirint} \int_{\wt \Omega}^{\oplus} H(k)\,dk, \eeq where
$\wt \Omega\in\R^2$ is an elementary cell $\R^d/\Gamma'$ of the dual lattice
\beq
\label{lastp2}
\Gamma'=\{m_1 b_1'+m_2 b_2',\,m_1,m_2\in 2\pi \Z\},\quad \<b_i,b_j'\>=\delta_{ij},
\eeq
and the ($m$-sectorial) operators $H(k)$ in
$\L^2(\Omega)$ are defined on the domain $\H^2_{\per}(\Omega)$ by \beq
\label{hk_def} H(k)=(-i\nabla+\bark-A)^*g(-i\nabla+k-A)+V,\quad
k\in \C^2. \eeq 
The family \eqref{hk_def} is an analytic
type A operator family with a compact resolvent, in the sense of \cite{Kato}. This means
that the domains $\dom H(k)$ do not depend on $k$, and $H(k)u$ is
a (weakly) analytic vector-valued function of $k_1$ and $k_2$ for
any $u\in \dom H(k)=\H^2_{\per}(\Omega)$. Note that, while \eqref{dirint} and the statement of the main result only use real values of $k$, we will often need to consider \eqref{hk_def} for $k\in \C^2$, and we need $\bark$ in the definition to keep the expression analytic.

For $k\in \R^2$, let us denote the eigenvalues of $H(k)$, taken in the
non-decreasing order, by $\lambda_j(k)$. These eigenvalues, considered
as functions of $k$, are called {\it band functions}. These
functions are $\Gamma'$-periodic and piecewise real analytic on
$\R^2$. The spectrum of $H$
$$
\sigma(H)=\bigcup_j [\lambda_j^-,\lambda_j^+]
$$
is the union of the {\it spectral bands}
$[\lambda_j^-,\lambda_j^+]$ which are the ranges of
$\lambda_j(\cdot)$. It is well known (under much wider assumptions
than ours, see \cite{BSu,Sht1}) that there are no degenerate
bands, \ie we always have $\lambda_j^-<\lambda_j^+$ . The bands,
however, can overlap. Our main result concerns the structure of the extrema of band functions.
\begin{theorem}
\label{main}Let $H$ be the operator $\eqref{h_def}$ with the
potentials and the metric satisfying $\eqref{v_cond}$,
$\eqref{a_cond}$, $\eqref{g_cond}$. Let $\lambda_*$ be a global minimal or maximal value of $\lambda_j(\cdot)$. Then the level set
$$
\{k\in \wt\Omega \colon \lambda_j(k)=\lambda_*\}
$$
is finite.
\end{theorem}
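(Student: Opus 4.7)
The plan is to argue by contradiction. Suppose the level set $\Sigma := \{k \in \wt\Omega : \lambda_j(k) = \lambda_*\}$ is infinite. Being closed in the compact torus $\wt\Omega$, $\Sigma$ has an accumulation point $k_0$, at which $\lambda_*$ is an eigenvalue of $H(k_0)$. Away from the (analytic, generically discrete) set of $k$ where two or more eigenvalues of $H(k)$ collide, $\lambda_j$ is real-analytic, and a real-analytic function on $\R^2$ whose zero set accumulates at a point must, locally near that point, vanish on a real-analytic arc (by the structure theorem for real-analytic sets, via Weierstrass preparation applied to the complexification). Eigenvalue-collision points form a set of positive codimension and can be treated separately using analytic perturbation theory and Puiseux expansions. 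It therefore suffices to rule out the case that $\Sigma$ contains a real-analytic arc $\gamma$ through a point $k_0$ at which $\lambda_j(k_0)$ is a \emph{simple} eigenvalue of $H(k_0)$.

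After an affine change of coordinates, I would assume $k_0 = 0$ and parametrize $\gamma$ locally as $k_2 = \phi(k_1)$ for a real-analytic $\phi$ with $\phi(0) = 0$. Since $\lambda_*$ is, say, a global maximum, the function $k_2 \mapsto \lambda_j(k_1,k_2)$ attains its maximum at $k_2 = \phi(k_1)$ for each small real $k_1$, and analyticity in $k_2$ gives the local expansion
\[
\lambda_j(k_1,k_2) - \lambda_* \;=\; -c(k_1)\bigl(k_2 - \phi(k_1)\bigr)^{2m} + O\bigl((k_2-\phi(k_1))^{2m+1}\bigr),
\]
with $c$ real-analytic and strictly positive, and with $m \ge 1$ an integer that is constant along $\gamma$ outside a discrete set (the vanishing order in $k_2$ is generically minimal and can jump up only at isolated $k_1$). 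Analytic perturbation theory extends $\phi$, $c$, and the normalized Bloch eigenfunction $u(k;\cdot)$ to holomorphic functions on a complex neighborhood of a real interval in the $k_1$-variable.

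By the identity theorem applied to the extended $\lambda_j$, the relation $\lambda_j\bigl(z,\phi(z)\bigr) = \lambda_*$ persists for $z$ in a complex neighborhood of $0$; i.e., $\lambda_*$ is an eigenvalue of the non-selfadjoint operator $H\bigl(z,\phi(z)\bigr)$ along an entire complex analytic curve. The strategy for the contradiction is to compute, via the extended perturbation expansion, the eigenvalue $\lambda_j(k_1, \phi(k_1) + i\tau)$ for real $k_1$ and small real $\tau$: the leading term $(-1)^{m+1} c(k_1)\tau^{2m}$ is purely real. On the other hand, an independent computation of $\im\lambda_j$ — obtained by taking imaginary parts in the eigenvalue equation and using a commutator identity tailored to the 2D structure (in the spirit of Thomas-type arguments for absolute continuity) — should produce a nonzero imaginary contribution of lower order in $\tau$. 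The mismatch then yields the desired contradiction.

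The principal obstacle is making this last step rigorous in the present generality. In dimensions $d \ge 3$ the Fermi variety has complex dimension $d-1 \ge 2$ and one has too much freedom for complex shifts tangent to it; only in $d = 2$ does the Fermi variety reduce to a complex curve, so that the analysis reduces effectively to one complex variable and the Thomas-type rigidity can be brought to bear. Additional technical care will be needed for band crossings (where the power expansion must be replaced by Puiseux series), for propagating the simple-eigenvalue assumption along $\gamma$ under analytic continuation, and for dealing with the finitely many points where $c$ vanishes or where $m$ jumps.
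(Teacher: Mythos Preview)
Your proposal has a genuine gap at the decisive step. The hoped-for contradiction---that a ``Thomas-type commutator identity'' produces a nonzero imaginary part of $\lambda_j(k_1,\phi(k_1)+i\tau)$ at order lower than $\tau^{2m}$---cannot hold. For real $k_1$ the function $k_2\mapsto\lambda_j(k_1,k_2)$ is real-analytic with real Taylor coefficients, so after substituting $k_2=\phi(k_1)+i\tau$ the imaginary part is automatically $O(\tau^{2m+1})$. Checking this against the direct computation: $\operatorname{Im}\lambda_j\cdot\|u\|^2=\langle \tfrac{1}{2i}(H(k)-H(\bar k))u,u\rangle$, whose order-$\tau$ term is proportional to $\langle(-i\partial_2+\phi(k_1)-A_2)\,\omega^2 u,u\rangle=\tfrac12\partial_{k_2}\lambda_j|_{k_2=\phi(k_1)}=0$ by Feynman--Hellmann at the critical point; the higher orders vanish for the same reason, up to order $2m$. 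The two computations are consistent, and no mismatch appears. What you have actually established is that the complex Fermi variety $\{k:\lambda_*\in\sigma(H(k))\}$ contains a component of multiplicity $\ge 2$ along the complexified curve $k_2=\phi(k_1)$; ruling this out is the whole difficulty, and a bare Thomas argument (which only gives that the Fermi variety meets generic complex lines in finitely many points, saying nothing about multiplicities) does not do it.

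The paper's route is quite different and supplies exactly the missing mechanism. One introduces the non-selfadjoint operator pencil $T_1(k_2,\lambda)$ on $\H^1_{\per}\oplus\L^2$ whose eigenvalues in $k_1$ are precisely the $k_1$ with $\lambda\in\sigma(H(k_1,k_2))$, counted with \emph{algebraic} multiplicity. The key lemma is that a local extremum of $\lambda_j$ at $(k_1^*,k_2^*)$ forces $k_1^*$ to be a \emph{degenerate} eigenvalue of $T_1(k_2^*,\lambda_*)$. Thus the question becomes: can $T_1(k_2,\lambda_*)$ have a real degenerate eigenvalue for all $k_2$ in an interval? A discriminant argument shows this is an analytic condition in $k_2$, so it either fails on a discrete set or holds identically on a complex strip. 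The latter is then excluded by exhibiting specific complex values $k_2=\tfrac{\pi}{2}+n+i(\tfrac{\pi}{2}+l)\alpha$ at which the spectrum of $T_1$ is \emph{simple}. That simplicity is proved by a homotopy from the free operator together with a genuinely two-dimensional ``brick wall'' estimate on the symbol $h_m(k)$, which confines each eigenvalue of $T_\mu(k_2)$ to its own rectangle throughout the deformation. This is where the $d=2$ hypothesis actually enters; your outline correctly senses the dimensional restriction but does not locate the concrete tool.
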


The following Liouville theorem at the edge of the spectrum
follows immediately from Theorem \ref{main}, see \cite[Theorem 23
and Remark 6.1]{KuP2} or \cite[Theorem 4.4]{KuP}.
\begin{cor}
\label{liouville} Let $H$ be the operator $\eqref{h_def}$ with the
potentials and the metric satisfying $\eqref{v_cond}$,
$\eqref{a_cond}$, $\eqref{g_cond}$. Then for every fixed $\lambda_*\in
\dd(\sigma(H))$ and $n\in \N$, the space of solutions of
$$
(-i\nabla -A(x))^{*}g(x)(-i\nabla -A(x))u(x)+V(x)u(x)=\lambda_*
u(x)
$$
satisfying
$$
|u(x)|=O((1+|x|)^n)
$$
has finite dimension $($which may depend on $\lambda_*$ and $n)$.
\end{cor}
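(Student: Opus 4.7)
The plan is to deduce Corollary \ref{liouville} from Theorem \ref{main} together with the Floquet--transform machinery of Kuchment and Pinchover cited in the statement, rather than build the full Liouville theory from scratch.

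First I would verify that any $\lambda_*\in\partial\sigma(H)$ is attained as a global extremum of at least one band function. Since $\sigma(H)=\bigcup_j[\lambda_j^-,\lambda_j^+]$, such a $\lambda_*$ must coincide with some endpoint $\lambda_j^{\pm}$, and by continuity of $\lambda_j$ and compactness of $\widetilde\Omega$ the value is realised at some $k\in\widetilde\Omega$. Only finitely many indices $j$ can contribute, because $\lambda_j(k)\to\infty$ locally uniformly. Applying Theorem \ref{main} to each such band function and taking the finite union, I obtain that the full level set at $\lambda_*$,
\[
\Sigma=\{k\in\widetilde\Omega:\lambda_j(k)=\lambda_*\ \text{for some }j\text{ with }\lambda_j^{\pm}=\lambda_*\},
\]
is a finite set $\{k_1,\dots,k_N\}$.

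Second, I would invoke the abstract result \cite[Theorem 23 and Remark 6.1]{KuP2} (equivalently \cite[Theorem 4.4]{KuP}): for a periodic elliptic operator whose spectrum is absolutely continuous (known in 2D) and whose Fermi surface at a spectral edge $\lambda_*$ consists of finitely many points, the space of polynomially bounded solutions of $Hu=\lambda_* u$ is finite-dimensional, with dimension controlled by $N$, by the multiplicities of $\lambda_*$ as an eigenvalue of each $H(k_\ell)$, and by the admissible polynomial growth order. Since the analytic type A property of the family $H(k)$ and the ellipticity of $H$ are already built into \eqref{h_def}--\eqref{hk_def}, the hypotheses of the cited theorem are all in place once finiteness of $\Sigma$ is known.

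The main obstacle is not in the above reduction, which is essentially bookkeeping. It lies in the Kuchment--Pinchover Floquet argument itself, which relates polynomially bounded solutions to a local Puiseux-type analysis of the Bloch variety near the finite Fermi surface and ultimately identifies them with finite-dimensional jets of Bloch functions at the $k_\ell$. Since that input is a black box for us, the only task is to check its finiteness hypothesis, and Theorem \ref{main} delivers exactly that.
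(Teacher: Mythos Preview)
Your proposal is correct and follows exactly the approach the paper takes: the paper states that Corollary~\ref{liouville} ``follows immediately from Theorem~\ref{main}, see \cite[Theorem 23 and Remark 6.1]{KuP2} or \cite[Theorem 4.4]{KuP},'' and you have simply spelled out the bookkeeping (finitely many bands touch $\lambda_*$, each contributes a finite level set by Theorem~\ref{main}, so the Fermi surface is finite and the cited black-box applies). There is nothing to add.
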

\begin{remark}
\label{gauge}
The second and third conditions from \eqref{a_cond} can be imposed without loss of generality using a gauge transformation $A\mapsto A-\nabla\Phi-|\Omega|^{-1}\int_\Omega A(x)\,dx$ (see, for example, \cite[Section 1.2]{BSu98}) with $\Phi$ periodic. The addition of $-\nabla\Phi$ is a unitary equivalence transformation of $H(k)$ for all $k$, and the addition of the last term is equivalent to the change of the quasimomentum $k\mapsto k-|\Omega|^{-1}\int_\Omega A(x)\,dx$. Neither of these changes affects the main result.
\end{remark}
\vskip 3mm {\noindent \bf The structure of the paper. }In Sections
3 -- 5, we deal with the case of the {\it scalar metric}
$g(x)=\omega^2(x)\one$. The proof is based on an identity from
\cite{HR}. This identity shows that the values of $k_1$ such that
$\lambda(k_1 e_1+k_2 e_2)=\lambda$, are eigenvalues of a certain
non-selfadjoint operator $T_1(k_2,\lambda)$ (see Proposition
\ref{t1_prop} below). Our main observation is that the band edges
correspond to {\it degenerate} eigenvalues of that operator.
In Section 3, we introduce the operator $T_1$ 
and formulate the main technical result (Theorem \ref{main_tech}), which shows that the 
set of the values of $k_2$ for which $T_1(k_2,\lambda)$ may have degenerate eigenvalues, is discrete. Theorem \ref{main_tech} immediately implies the main result.
In Section 4, we
show that the condition of the operator $T_1(k_2,\lambda)$ having
degenerate eigenvalues is an analytic type condition. Hence,
either the set of ``degenerate'' $k_2$ is discrete, or the
operator $T_1(k_2,\lambda)$ has degenerate eigenvalues for all
$k_2\in \C$. In Section 5, we show that the latter case is
impossible for the free operator and hence, using perturbation
theory and estimates on the symbol, for the perturbed operator.
Section 6 describes the reduction of the case of a general $\c^2$-metric to the case of a
scalar one. In Section 7, we give an example of a {\it discrete}
periodic Schr\"odinger operator for which the statement of the
main theorem fails.
\section{The operator $T_1(k_2,\lambda)$}
In this section, we deal with the operator family 
\beq \label{300}
H(k)=(-i\nabla+\bark-A)^*\omega^2(-i\nabla+k-A)+V, 
\eeq 
which is a particular case of \eqref{hk_def}; here
$\omega$ is a scalar function satisfying \beq \label{301}
\omega\in \c^2_{\per}(\Omega), \qquad \omega^2 \ge m_g > 0. \eeq

Let $e_1$, $e_2$ be a standard basis in $\R^2$. We also denote the
coordinates of $k$ by $k_1,k_2$, that is, $k=k_1 e_1+k_2 e_2$, and
we will often denote $H(k)=H(k_1 e_1+k_2 e_2)$ by $H(k_1,k_2)$.
Since the statement of the main result is invariant under
rotations and dilations of $\R^2$, we can fix the following choice
of basis of the dual lattice: \beq \label{choiceofbasis}
b_1'=\alpha e_1,\quad b_2'=\beta e_1+e_2,\quad \text{where }
\,\alpha,\beta\in \R. \eeq

In the Hilbert space $\H^1_{\per}(\Omega)\oplus \L^2(\Omega)$, consider the following unbounded nonselfadjoint operator family:
\beq
\label{t1_def}
T_1(k_2,\lambda):=\begin{pmatrix}
0&\omega^{-2} I\\
-(H(0,k_2)-\lambda)\quad \quad \,\,&2\l(i\dd_1+A_1\r)-2i \omega^{-1}\dd_1\omega
\end{pmatrix},\quad k_2,\lambda\in\C,
\eeq
where $\dom(T_1(k_2,\lambda))=\H^2_{\per}(\Omega)\oplus \H^1_{\per}(\Omega)$, and $\dd_1=\frac{\dd}{\dd x_1}$.

The operator $T_1(k_2,\lambda)$ is introduced in order to ``linearize'' the equation $H(k_1,k_2)u=\lambda u$, considered as a quadratic eigenvalue problem in $k_1$, similarly to \cite[Lemma 3]{HR}. We summarize the properties of the family $T_1$ (most of which were also used in \cite{HR}) in the following proposition.
\begin{prop}
    \label{t1_prop}
    The operators $T_1(k_2,\lambda)$ satisfy the following properties.
    \begin{enumerate}
         \item[\rm (i)] For all $k_2,\lambda\in \C$, the operator $T_1(k_2,\lambda)$ is closed on the domain $\H^2_{\per}(\Omega)\oplus \H^1_{\per}(\Omega)$. As a consequence, the family $T_1(\cdot,\lambda)$ is an analytic type A operator family.
         \item[\rm (ii)] Suppose that $\lambda\notin \sigma(H(k_1,k_2))$. Then $k_1\notin \sigma(T_1(k_2,\lambda))$, and the resolvent 
         \beq
         \label{resolvent}
         \l(T_1(k_2,\lambda)-k_1\begin{pmatrix}
         I&0\\ 0&I
         \end{pmatrix}\r)^{-1}
         \eeq
          is compact in $\H^1_{\per}(\Omega)\oplus\L^2(\Omega)$.        
        \item[\rm (iii)] $k_1\in \sigma(T_1(k_2,\lambda))$ if and only if $\lambda\in \sigma(H(k))$, where $k=k_1 e_1+k_2 e_2$.
\item[\rm (iv)] For all $k_2,\lambda\in \C$, the set $\sigma(T_1(k_2,\lambda))$ is discrete in $\C$ and
         $2\pi\alpha$-periodic, where $\alpha$ is defined in
         \eqref{choiceofbasis}.
    \end{enumerate}
\end{prop}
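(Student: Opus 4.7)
My plan is to exploit the block structure of $T_1(k_2,\lambda)$ together with the fact that, by design, $T_1-k_1 I$ linearizes the quadratic operator pencil $H(k_1,k_2)-\lambda$ in $k_1$. With this linearization identity in hand, properties (i)--(iv) will all follow from the corresponding properties of the family $H(k)$.

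For (i), I would prove closedness by a direct graph argument. If $(u_n,v_n)\to (u,v)$ in $\H^1_{\per}(\Omega)\oplus \L^2(\Omega)$ and $T_1(u_n,v_n)\to (f,g)$ in the same space, then the top component $\omega^{-2}v_n\to f$ in $\H^1_{\per}$ forces $v=\omega^2 f\in\H^1_{\per}$, since $\omega\in\c^2_{\per}$ is a smooth multiplier bounded from below by \eqref{301}. The bottom component then shows that $(H(0,k_2)-\lambda)u_n$ converges in $\L^2$, and closedness of $H(0,k_2)$ on the domain $\H^2_{\per}(\Omega)$ (a consequence of the type A/$m$-sectorial structure recalled in the paper) yields $u\in\H^2_{\per}$ together with $T_1(u,v)=(f,g)$. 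Analyticity of the family is then immediate: the only $(k_2,\lambda)$-dependent block is $H(0,k_2)-\lambda$, itself a type A analytic family in $k_2$ and affine in $\lambda$.

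For (ii) and (iii) I would use the operator identity
\begin{equation*}
H(k_1,k_2)-H(0,k_2)=-k_1\,Q\,\omega^2+k_1^2\omega^2,\qquad Q:=2(i\dd_1+A_1)-2i\omega^{-1}(\dd_1\omega),
\end{equation*}
a short direct computation from the scalar form of the metric. Using it, solving $(T_1(k_2,\lambda)-k_1 I)(u,v)=(f,g)$ reduces, after eliminating $v=\omega^2(f+k_1 u)$ from the first row, to the single scalar equation $(H(k_1,k_2)-\lambda)u=Q(\omega^2 f)-k_1\omega^2 f-g$. When $\lambda\notin\sigma(H(k_1,k_2))$, this explicitly produces a bounded inverse of $T_1-k_1 I$ mapping $\H^1_{\per}\oplus\L^2$ into $\H^2_{\per}\oplus\H^1_{\per}$; composing with the compact Sobolev embedding $\H^2_{\per}\oplus\H^1_{\per}\hookrightarrow\H^1_{\per}\oplus\L^2$ yields compactness of the resolvent \eqref{resolvent} and establishes (ii). Specialising the same calculation to $(f,g)=0$ shows that $(u,v)\in\ker(T_1-k_1 I)$ is equivalent to $v=k_1\omega^2 u$ together with $H(k_1,k_2)u=\lambda u$, which combined with (ii) gives the spectral equivalence (iii).

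For (iv), the analytic Fredholm theorem applied to the type A family $k_1\mapsto T_1(k_2,\lambda)-k_1 I$ (closed by (i), with compact resolvent wherever (ii) applies) reduces discreteness of $\sigma(T_1(k_2,\lambda))$ to producing a single $k_1\in\C$ for which $\lambda\notin\sigma(H(k_1,k_2))$. This is the one genuinely technical point, and I expect to handle it by a numerical-range estimate: for real $k_1$ of sufficiently large absolute value, the positive principal term $k_1^2\int\omega^2|u|^2\ge m_g k_1^2\|u\|^2$ dominates the cross and lower-order terms in $\<H(k_1,k_2)u,u\>$, pushing the numerical range (hence, the spectrum, since the resolvent is compact) off any prescribed $\lambda\in\C$. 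The $2\pi\alpha$-periodicity then follows at once from (iii) and the standard unitary equivalence $H(k+b')\sim H(k)$ for $b'=2\pi\alpha e_1\in\Gamma'$, implemented by multiplication by the character $e^{-i\<b',x\>}$.
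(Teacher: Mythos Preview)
Your treatment of parts (i)--(iii) is correct and essentially coincides with the paper's proof: the same graph/closedness argument, the same linearization identity reducing $(T_1-k_1 I)(u,v)=(f,g)$ to $(H(k_1,k_2)-\lambda)u=Q(\omega^2 f)-k_1\omega^2 f-g$, and the same eigenvector $(u,k_1\omega^2 u)$.

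Part (iv), however, has a real gap. Your numerical-range estimate cannot work for \emph{real} $k_1$: by the very unitary equivalence $H(k+2\pi\alpha e_1)\cong H(k)$ that you use for the periodicity statement, the spectrum $\sigma(H(k_1,k_2))$ is $2\pi\alpha$-periodic in $k_1$, so it cannot be pushed off a fixed $\lambda$ by letting $|k_1|\to\infty$ along $\R$. Concretely, in the expansion of $\int_\Omega \omega^2|(-i\dd_1+k_1-A_1)u|^2$, the cross term $2k_1\,\re\langle\omega^2(-i\dd_1-A_1)u,u\rangle$ is \emph{not} lower order: taking $u=e^{im\cdot x}$ with $\alpha m_1\approx -k_1$ makes $(-i\dd_1+k_1)u=O(1)$, so the putative $k_1^2$ term is completely cancelled. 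Thus the infimum over $\|u\|=1$ of $\re\langle H(k_1,k_2)u,u\rangle$ stays bounded as $k_1\to\infty$ in $\R$.

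The paper handles this point differently and nontrivially: it invokes the proofs of absolute continuity of the spectrum (the Thomas-type argument in \cite{BSu}), which show that for every $\lambda,k_2\in\C$ the set $\{k_1\in\C:\lambda\in\sigma(H(k_1,k_2))\}$ is discrete. This supplies a single $k_1$ with $\lambda\notin\sigma(H(k_1,k_2))$, after which your analytic-Fredholm reasoning goes through. The Thomas argument does proceed via resolvent bounds, but with $k_1$ having large \emph{imaginary} part (cf.\ Theorem~\ref{tech1} later in the paper), and it is not as elementary as the estimate you sketch.
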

\begin{proof}
{\it Part }(i). Clearly, the operator $T_1(k_2,\lambda)$ is bounded as an operator from $\H^2_{\per}(\Omega)\oplus \H^1_{\per}(\Omega)$ to $\H^1_{\per}(\Omega)\oplus \L^2(\Omega)$. We also have
$$
\l\|T_1(k_2,\lambda)\begin{pmatrix}
u\\v
\end{pmatrix}\r\|_{\H^1_{\per}(\Omega)\oplus \L^2(\Omega)}^2=
\|\omega^{-2}v\|_{H^1_{\per}(\Omega)}^2+
\|-(H(0,k_2)-\lambda)u+(2\l(i\dd_1+A_1\r)-2i \omega^{-1}\dd_1\omega)v\|^2_{L^2(\Omega)},
$$
from which it follows that the convergence in $T_1(k_2,\lambda)$-norm implies convergence of $v$ in $\H^1$ and convergence of $u$ in $\H^2$, so that $T_1(k_2,\lambda)$ is closed on its domain. Strong analyticity in $k_2$ and $\lambda$ follows directly from the definition.

{\it Part} (ii). Suppose that $\lambda\notin \sigma(H(k))$. Then the equation
$$
\l(T_1(k_2,\lambda)-k_1 \begin{pmatrix}
I&0\\0&I
\end{pmatrix}\r)\begin{pmatrix}
u\\v
\end{pmatrix}=\begin{pmatrix}
f\\g
\end{pmatrix}
$$
has a unique solution $\begin{pmatrix}
u\\v
\end{pmatrix}$ given by
\beq
\label{solution}
\begin{split}
u=(H(k)-\lambda)^{-1}\{(2i& \dd_1+2A_1 -2i\omega^{-1}\dd_1\omega-k_1) \omega^2 f-g\},\\
& v=\omega^2(f+k_1 u).
\end{split}
\eeq
Let $R(k,\lambda)=(H(k)-\lambda)^{-1}$. By plugging the expression for $u$ into the second equation of \eqref{solution}, we can rewrite \eqref{solution} in the operator form, applied to a vector $\begin{pmatrix}
f\\g
\end{pmatrix}\in \H^1_{\per}(\Omega)\oplus \L^2(\Omega)$:
\begin{multline*}
\l(T_1(k_2,\lambda)-k_1 \begin{pmatrix}
I&0\\0&I
\end{pmatrix}\r)^{-1}\begin{pmatrix}
f\\g
\end{pmatrix}
=\begin{pmatrix}
0&0\\\omega^2 I&0
\end{pmatrix}\begin{pmatrix}
f\\g
\end{pmatrix}+\\
+\begin{pmatrix}
I&0\\
0&k_1 \omega^2
\end{pmatrix}R(k,\lambda)\begin{pmatrix}
\l(2i\dd_1+2A_1(x)-k_1-2i\omega^{-1}(\dd_1\omega)\r)&\,\,\,-I\\
\l(2i\dd_1+2A_1(x)-k_1-2i\omega^{-1}(\dd_1\omega)\r)&\,\,\,-I
\end{pmatrix}
\begin{pmatrix}
\omega^2&0\\
0&I
\end{pmatrix}\begin{pmatrix}
f\\g
\end{pmatrix}.
\end{multline*}
The first operator in the right hand side is compact in $\H^1_{\per}(\Omega)\oplus \L^2(\Omega)$, because the embedding $\H^1_{\per}(\Omega)\subset \L^2(\Omega)$ is compact. The operator in the second term is compact since $R(k,\lambda)$ is bounded as an operator from $\L^2(\Omega)$ to $\H^2_{\per}(\Omega)$ and hence is compact from $\L^2(\Omega)$ to $\H^1_{\per}(\Omega)$. Hence, {\it the resolvent of $T_1(k_2,\lambda)$ is compact}, which completes the proof of Part (ii). 

{\it Part} (iii). The ``only if'' part is included in Part (ii). To establish ``if'' part, suppose that $H(k)u=\lambda u$. Then $u\in \H^2_{\per}(\Omega)$, and
$$
T_1(k_2,\lambda)\begin{pmatrix}
u\\ k_1 \omega^2 u
\end{pmatrix}=k_1 \begin{pmatrix}
u\\ k_1 \omega^2  u
\end{pmatrix}.
$$
This completes the proof of (iii).

{\it Part} (iv). From the proofs of the absolute continuity of the spectrum (see e.g., \cite{BSu}), it follows that, for any $\lambda, k_2\in \C$, the set $\{k_1\colon \lambda\in \sigma(H(k_1,k_2))\}$ is discrete. Hence, there exists at least one value of $k_1$ such that the resolvent \eqref{resolvent} exists, which, together with Part (ii), implies that $\sigma(T(k_2,\lambda))$ is discrete.
Periodicity of the spectrum follows from the fact that $H(k)$ is unitarily equivalent to $H(k+b')$ for any $b'\in \Gamma'$ (see \eqref{lastp2}), and so $H(k_1,k_2)$ is unitarily equivalent to $H\l(k_1+2\pi\alpha,k_2\r)$.
\end{proof}

In the sequel, by ``the multiplicity of an isolated eigenvalue''
we will mean {\it algebraic multiplicity}, \ie the dimension of
the range of the corresponding Riesz projection. We will call an
eigenvalue {\it degenerate} if its algebraic multiplicity is
greater than or equal to 2. Otherwise, an eigenvalue is called
{\it simple}.
\begin{lemma}
    \label{degenerate}
    Suppose that a band function $\lambda_j(\cdot)$ attains its local minimum or maximum value
    $\lambda_*$ at $k^*=k_1^* e_1+k_2^* e_2\in \R^2$. Then $k_1^*$ is an eigenvalue of $T_1(k_2^*,\lambda_*)$ of the $($algebraic$)$ multiplicity at least two.
\end{lemma}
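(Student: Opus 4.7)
The key observation is that $T_1(k_2^*,\lambda_*)$ is essentially the companion linearization of the quadratic operator pencil obtained from $H(k_1,k_2^*)u = \lambda_* u$. Expanding \eqref{300} in powers of $k_1$ gives
\begin{equation*}
P(k_1) := \omega^2 k_1^2 + k_1 B + (H(0,k_2^*) - \lambda_*), \qquad B := 2\omega^2 L_1 - 2i\omega(\partial_1\omega), \quad L_1 := -i\partial_1 - A_1.
\end{equation*}
Direct substitution into \eqref{t1_def} shows that the map $u \mapsto (u, k_1^*\omega^2 u)$ is a linear bijection from $\ker P(k_1^*)$ onto $\ker(T_1(k_2^*,\lambda_*) - k_1^*)$, and that any pair $(u_0, u_1)$ satisfying the pencil Jordan-chain equations $P(k_1^*)u_0 = 0$ and $P(k_1^*)u_1 + P'(k_1^*)u_0 = 0$ lifts to an honest Jordan chain of length $2$ for $T_1(k_2^*,\lambda_*)$ at $k_1^*$.

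By Proposition \ref{t1_prop}(iii), $k_1^*$ is already an eigenvalue of $T_1(k_2^*,\lambda_*)$; I only need to upgrade its algebraic multiplicity to $2$. If $\dim\ker P(k_1^*) = \dim\ker(H(k_1^*,k_2^*) - \lambda_*) \geq 2$, the bijection above immediately gives geometric multiplicity of $T_1(k_2^*,\lambda_*) - k_1^*$ at least $2$, and we are done. Otherwise, $\lambda_*$ is a simple eigenvalue of the self-adjoint operator $H(k_1^*,k_2^*)$, and the Kato--Rellich theorem produces, on a real neighborhood of $k_1^*$, analytic functions $\mu(k_1)$ and $u(k_1)$ with $H(k_1,k_2^*)u(k_1) = \mu(k_1)u(k_1)$, $\mu(k_1^*) = \lambda_*$, and $u_0 := u(k_1^*)$ spanning $\ker P(k_1^*)$. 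Simplicity forces $\mu(k_1) = \lambda_j(k_1,k_2^*)$ in a real neighborhood of $k_1^*$; combined with the local extremum hypothesis this forces $\mu'(k_1^*) = 0$.

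The Hellmann--Feynman identity (obtained by differentiating $H(k_1,k_2^*)u(k_1) = \mu(k_1)u(k_1)$ at $k_1^*$ and pairing with $u_0$) then yields $\langle u_0, P'(k_1^*) u_0\rangle = \mu'(k_1^*)\|u_0\|^2 = 0$. Since $P(k_1^*) = H(k_1^*,k_2^*) - \lambda_*$ is self-adjoint with compact resolvent (inherited from $H(k_1^*,k_2^*)$), it has closed range equal to $u_0^\perp$; hence $P'(k_1^*) u_0 \in \operatorname{Ran} P(k_1^*)$, which produces $u_1$ with $P(k_1^*) u_1 = -P'(k_1^*) u_0$ and thus the required pencil Jordan chain of length $2$, lifting to the same for $T_1(k_2^*,\lambda_*)$. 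The main point requiring care is verifying the pencil-linearization correspondence under the $\omega^2$-weighting, since the companion form is not quite the textbook one; the rest is a standard interplay between self-adjointness, critical points, and Hellmann--Feynman.
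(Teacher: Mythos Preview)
Your proof is correct, but it takes a genuinely different route from the paper's.

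The paper argues by continuity of the Riesz projection in the spectral parameter $\lambda$: surrounding $k_1^*$ by a small circle, it perturbs $\lambda_*$ to $\lambda_*+\delta$ and observes that the equation $\lambda_j(k_1,k_2^*)=\lambda_*+\delta$ has at least two real solutions in $k_1$ near $k_1^*$ (using that band functions are non-constant on intervals, a consequence of the absolute continuity results of \cite{BSu}). Hence the Riesz projection has rank $\ge 2$ for the perturbed $\lambda$, and by continuity also at $\lambda_*$. No case split on the multiplicity of $\lambda_*$ is needed, and no Jordan chain is ever exhibited.

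Your argument instead exploits the quadratic-pencil structure explicitly: you identify $T_1(k_2^*,\lambda_*)$ as the companion linearization of $P(k_1)=H(k_1,k_2^*)-\lambda_*$, split on $\dim\ker P(k_1^*)$, and in the simple case use Kato--Rellich plus the Feynman--Hellmann identity to show $P'(k_1^*)u_0\in\Ran P(k_1^*)$, producing an honest Jordan chain $(u_0,u_1)$ which lifts to $T_1$. This is more constructive (one sees the generalized eigenvector $(u_1,\omega^2(u_0+k_1^*u_1))$ explicitly) and, notably, does \emph{not} invoke the absolute continuity of the spectrum, which the paper uses as a black box. The paper's argument, on the other hand, is softer and avoids both the simple/multiple case split and the verification of the pencil--linearization Jordan-chain correspondence under the $\omega^2$-weighting that you (correctly) flag as the delicate step.
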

\begin{proof}
    By Proposition \ref{t1_prop} $k_1^*$ is an eigenvalue of
    $T_1(k_2^*,\lambda_*)$.
    For some $\er>0$, there are no other eigenvalues of $T_1(k_2^*,\lambda_*)$ within the closed disc $\overline {B_{\er}(k_1^*)}$.
    Let
    $$
    P(k_2^*,\lambda):=-\frac{1}{2\pi i}\oint_{\partial B_{\er}(k_1^*)} (T_1(k_2,\lambda)-\ka I)^{-1}d\ka
    $$
    be the Riesz projection. The standard arguments \cite[Section IV.3.5]{Kato} show that, for some $\delta>0$,
    $\rank P(k_2^*,\lambda)$ is continuous in $\lambda$ as long as $|\lambda-\lambda_*|<\delta$.
    Without loss of generality, assume that $k^*$ is a local minimum of $\lambda_j(\cdot)$.

    From the proofs of the absolute continuity of the spectrum (see e.g., \cite{BSu}), it follows that $\lambda_j(k)$ cannot be constant in $k_1$ on any interval.
    Then, for a sufficiently small $\delta>0$, the equation $\lambda_j(k_1,k_2^*)=\lambda_*+\delta$ has at least
    two different solutions as an equation in $k_1$ (note that these arguments do not use any analyticity of $\lambda_j(\cdot,k_2^*)$, only continuity). Hence, by Part (iii) of 
    Proposition \ref{t1_prop}, $\rank P(k_2^*,\lambda_*+\delta)\ge 2$ for all sufficiently small $\delta$,
    and therefore $\rank P(k_2^*,\lambda_*)\ge 2$ due to continuity.
\end{proof}
\noindent The following is the main technical result of the paper.
\begin{theorem}
    \label{main_tech}
    Suppose that the coefficients $\omega,A,V$ satisfy \eqref{301},\eqref{a_cond},\eqref{v_cond}. For any $\lambda\in \R$, the set
    $$
        \{k_2\in \R\colon \text{the operator }T_1(k_2,\lambda)\text{ has at least one real degenerate eigenvalue}\}
    $$
    is discrete.
\end{theorem}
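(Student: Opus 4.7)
Fix $\lambda \in \R$ and suppose, for contradiction, that the set in the theorem has an accumulation point $k_2^* \in \R$. Following the outline for Sections~4--5, the plan is to encode the condition ``some eigenvalue of $T_1(k_2,\lambda)$ near a chosen point $k_1^*$ is degenerate'' as the vanishing of a scalar analytic function of the complex variable $k_2$; the identity theorem will then force this function to vanish on a full complex neighborhood of $k_2^*$, and I will contradict that using the explicit spectrum of the free operator together with elliptic estimates as $|\im k_2|\to\infty$.

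For the first step, choose a sequence $k_2^{(n)}\to k_2^*$ with real degenerate eigenvalues $k_1^{(n)}$ of $T_1(k_2^{(n)},\lambda)$. By the $2\pi\alpha$-periodicity of Proposition~\ref{t1_prop}(iv) we may assume $k_1^{(n)}\in[0,2\pi\alpha]$ and, after passing to a subsequence, $k_1^{(n)}\to k_1^*$. Stability of the spectrum of an analytic type~A family with compact resolvent (Proposition~\ref{t1_prop}(i) together with standard Kato theory) implies that $k_1^*$ is itself a degenerate eigenvalue of $T_1(k_2^*,\lambda)$. Pick $\varepsilon>0$ so that $\overline{B_\varepsilon(k_1^*)}$ isolates this eigenvalue, and form the Riesz projection
\[
P(k_2) = -\frac{1}{2\pi i}\oint_{\partial B_\varepsilon(k_1^*)}(T_1(k_2,\lambda)-\varkappa I)^{-1}\,d\varkappa,
\]
which is analytic and of some constant rank $n\geq 2$ on a complex neighborhood $U$ of $k_2^*$. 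Restricting $T_1(k_2,\lambda)$ to $\ran P(k_2)$ (after transporting to a fixed reference space in the standard Kato manner) yields an analytic $n\times n$ matrix family $M(k_2)$. The discriminant $\Delta(k_2)$ of the characteristic polynomial of $M(k_2)$ is analytic on $U$ and vanishes precisely where $M(k_2)$ admits a multiple eigenvalue. Since $\Delta(k_2^{(n)})=0$ for $k_2^{(n)}\to k_2^*$, the identity theorem forces $\Delta\equiv 0$ on $U$.

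To contradict $\Delta\equiv 0$, I would propagate this identity by analytic continuation into a regime where $T_1(k_2,\lambda)$ is well approximated by the free operator $T_1^{(0)}(k_2,\lambda)$ (the case $A=0$, $V=0$, $\omega=1$). The spectrum of $T_1^{(0)}(k_2,\lambda)$ can be read off directly, namely
\[
k_1^{(n_1,n_2,\pm)} = -2\pi(n_1\alpha + n_2\beta)\pm\sqrt{\lambda - (k_2 + 2\pi n_2)^2},\qquad n_1,n_2\in\Z,
\]
and an elementary check shows that two such values coincide only on a discrete subset of $k_2\in\C$; in particular, for $k_2=k_2^*+it$ with $|t|$ large, all free eigenvalues in any bounded region of the $k_1$-plane are pairwise distinct. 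Elliptic resolvent estimates on $H(k)$ for $|\im k_2|\gg 1$ should then show that the eigenvalues of $T_1(k_2,\lambda)$ in such a region are close to those of $T_1^{(0)}(k_2,\lambda)$ and hence simple at at least one such $k_2$, contradicting $\Delta\equiv 0$.

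The hardest step is this final analytic continuation and reduction to the free operator. Locally $\Delta$ is defined only through a small contour around $k_1^*$, and as $k_2$ moves into the complex plane, eigenvalues of $T_1(k_2,\lambda)$ may enter or leave that contour, breaking the definition of $M(k_2)$. To preserve analyticity one must either enlarge the contour adaptively or exploit the periodicity of $\sigma(T_1(\cdot,\lambda))$ in $\re k_1$ to enclose a complete horizontal strip and work with the resulting finite-dimensional discriminant. Making this quantitative --- i.e., controlling the eigenvalue count and location of $T_1(k_2,\lambda)$ uniformly in $k_2$ along a path from $U$ to the asymptotic regime $|\im k_2|\gg 1$ --- is the main obstacle and will occupy the bulk of Sections~4 and~5.
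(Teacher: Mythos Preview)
Your overall strategy---encode degeneracy as the zero set of an analytic discriminant on a finite-rank Riesz range, then analytically continue in $k_2$ to a point where the spectrum of $T_1$ is provably simple---is precisely the paper's. The differences are all in the two places you yourself flag as the hard part, and it is worth seeing how the paper handles them.

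For the contour issue, the paper never works with a small disc around a single $k_1^*$. It first uses Theorem~\ref{tech1} to trap $\sigma(T_1(k_2))$ in a fixed horizontal strip $|\im k_1|<C_1$ along the entire path in the $k_2$-plane (which forces the path to keep $\re k_2$ away from $2\pi\Z$; the boundary case is Case~2 of the proof), and then, exploiting the $2\pi\alpha$-periodicity in $k_1$, takes the contour to be the boundary of a rectangle of width exactly $2\pi\alpha$ inside that strip. Such a rectangle always encloses one full period of the spectrum, so the Riesz rank is constant and the condition ``degenerate eigenvalue inside $\mathcal C$'' is independent of which period-rectangle is chosen; this is what lets the chain of applications of Lemma~\ref{discriminant} glue along overlapping neighbourhoods without any adaptive contour enlargement.

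For the simplicity input, the paper does not let $|\im k_2|\to\infty$ and compare $T_1$ to $T_1^{(0)}$. It fixes one specific $k_2^{(1)}=\tfrac{\pi}{2}+n+i(\tfrac{\pi}{2}+l)\alpha$ and runs a homotopy $T_\mu(k_2^{(1)})$ in the \emph{coefficients} $(\mu V,\,\mu A,\,\mu\omega+(1-\mu))$, $\mu\in[0,1]$, at that fixed $k_2^{(1)}$. The key two-dimensional device is a ``brick wall'' $G_n\subset\C$ (Lemma~\ref{tech2_aux} and Corollary~\ref{bricks}) on which $\|H(k)^{-1}\|\le C/|l|$ uniformly in $\mu$, with exactly one simple free eigenvalue per brick; since no eigenvalue of $T_\mu$ can touch $G_n$, simplicity persists from $\mu=0$ to $\mu=1$. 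Your proposed route is more fragile than it looks: the perturbation $H(0,k_2)-H_0(0,k_2)$ contains terms of first order in $k_2$ coming from $A$ and from $\omega$, so it is not bounded as $|\im k_2|\to\infty$, and a direct ``eigenvalues of $T_1$ are close to those of $T_1^{(0)}$'' argument does not fall out of naive relative boundedness. The paper sidesteps this entirely: it never compares the resolvents of $T_1$ and $T_1^{(0)}$, only bounds $\|H(k)^{-1}\|$ on the fixed grid $G_n$ via the Pauli factorization of Proposition~\ref{commutator}.
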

{\noindent \bf Proof of Theorem $\ref{main}$: the case of a scalar
metric. }Fix a band function $\lambda_j(\cdot)$ and assume that
$\lambda_*$ is a minimum or a maximum of $\lambda_j$. From Theorem
\ref{main_tech} and Lemma \ref{degenerate}, the set of possible
$k_2$ such that for some $k_1$ we have $\lambda_j(k)=\lambda_*$,
is discrete. For each of these $k_2$, the set of possible values
of $k_1$ is also discrete by Proposition \ref{t1_prop}.\,\qed

Theorem \ref{main_tech} is proved in Sections \ref{proof331} and \ref{proof332}. The rest of the proof of Theorem \ref{main} is a (mostly standard) argument
of transforming a general metric to a scalar metric by introducing isothermal coordinates. This is done in Section \ref{variablemetric}. 

\section{Proof of Theorem \ref{main_tech}}
\label{proof331}
Let
$$
p(z)=z^n+a_{n-1}z^{n-1}+\ldots+a_0
$$
be a monic polynomial with roots $z_1,\ldots,z_n$. The {\it discriminant} of $p$ is defined as
$$
\Delta(p)=\prod_{1\le i<j\le n}(z_i-z_j)^2.
$$
It is clear that $\Delta(p)$ vanishes if and only if $p$ has roots
of multiplicity greater than or equal to 2. It is well known (see,
for example, \cite[Section 5.9]{WW}) that $\Delta(p)$ is a
polynomial function of the coefficients $a_0,\ldots,a_{n-1}$. 

The proof of the following lemma can be extracted from a slightly different and more abstract setting of \cite{Ku_book,Kuch_Z}. For the convenience of the reader, we include the argument.
\begin{lemma}
    \label{discriminant}
    Let $\mathcal C$ be a simple closed piecewise smooth contour in $\C$, and let $\{T(z),z\in \mathcal D\}$ be
    an operator family of type A in a Hilbert space ${\mathcal H}$ analytic in a simply connected domain $\mathcal D\subset \C$.
	Suppose that, for all $z\in \mathcal D$, the spectrum of $T(z)$ in the interior of $\mathcal C$ is discrete and finite,
    and $\sigma(T(z))\cap \mathcal C=\varnothing$. Then the set
    $$
    \{z\in \mathcal D\colon T(z)\text { has at least one degenerate eigenvalue in the interior of }\mathcal C\}
    $$
    is a null-set of a function analytic in $\mathcal D$, and hence this set either coincides with $\mathcal D$ or is discrete in $\mathcal D$.
\end{lemma}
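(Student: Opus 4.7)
The plan is to reduce the statement to the vanishing of a single scalar analytic function on $\mathcal D$, obtained as the discriminant of a finite-dimensional characteristic polynomial attached to $T(z)$ via the Riesz projection. First, since $\mathcal C$ avoids $\sigma(T(z))$ for every $z\in\mathcal D$ and the resolvent of a type A family is jointly analytic in $(z,\ka)$ on the complement of the spectrum, the Riesz projection
$$
P(z):=-\frac{1}{2\pi i}\oint_{\mathcal C}(T(z)-\ka I)^{-1}\,d\ka
$$
is an analytic bounded-operator-valued function of $z\in\mathcal D$. By hypothesis $P(z)$ has finite rank, and by the local stability argument of \cite[Section IV.3.5]{Kato}, $\rank P(z)$ is locally constant; since $\mathcal D$ is connected, it equals some fixed integer $n\ge 0$ on all of $\mathcal D$. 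If $n=0$ the set in question is empty, so I assume $n\ge 1$.

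Next I would reduce to a finite-dimensional matrix computation. Because $\Ran P(z)\subset\dom T(z)$, the operator
$$
S(z):=T(z)P(z)=-\frac{1}{2\pi i}\oint_{\mathcal C}\ka\,(T(z)-\ka I)^{-1}\,d\ka
$$
is bounded, analytic in $z$, and of rank at most $n$; its nonzero spectrum coincides, with algebraic multiplicity, with the part of $\sigma(T(z))$ inside $\mathcal C$. The power sums
$$
a_k(z):=\tr S(z)^k,\qquad k=1,\dots,n,
$$
are then scalar analytic functions on $\mathcal D$, since $S(z)^k$ is an analytic operator-valued function of uniformly bounded finite rank. By Newton's identities, the coefficients $c_0(z),\dots,c_{n-1}(z)$ of the monic characteristic polynomial
$$
p_z(\ka)=\ka^n+c_{n-1}(z)\ka^{n-1}+\dots+c_0(z)
$$
of $T(z)|_{\Ran P(z)}$ are polynomials in $a_1(z),\dots,a_n(z)$, hence analytic on $\mathcal D$.

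Finally, I would set $\Delta(z)$ equal to the discriminant of $p_z$. By the facts recalled just before the lemma, $\Delta(z)$ is a polynomial in $c_0(z),\dots,c_{n-1}(z)$, hence analytic on $\mathcal D$, and it vanishes precisely when $p_z$ has a repeated root, equivalently when $T(z)$ has a degenerate eigenvalue in the interior of $\mathcal C$. Since $\mathcal D$ is connected, $\Delta$ is either identically zero or has a discrete zero set in $\mathcal D$, which is the desired dichotomy. The only mildly delicate point I would need to verify carefully is the joint analyticity of $(T(z)-\ka I)^{-1}$ on a neighborhood of $\mathcal D\times\mathcal C$; this follows from the type A property together with compactness of $\mathcal C$ and upper semicontinuity of $\sigma(T(\cdot))$ for type A families, and is the only non-routine ingredient in the argument.
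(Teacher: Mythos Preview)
Your argument is correct and follows the same overall strategy as the paper: form the analytic Riesz projection $P(z)$, reduce to the characteristic polynomial of the finite-dimensional restriction $T(z)|_{\Ran P(z)}$, and take its discriminant. The one technical difference is in how you show the coefficients of that polynomial are analytic in $z$. The paper invokes Kato's transformation function $U(z)$ (\cite[Section VII.1.3]{Kato}) to conjugate $T(z)|_{\Ran P(z)}$ to an analytic family $T_0(z)$ on the \emph{fixed} space $\Ran P(z_0)$, after which the characteristic polynomial is read off directly; this is where the simple connectedness of $\mathcal D$ is used. You instead compute the power sums $a_k(z)=\tr S(z)^k$ and recover the coefficients via Newton's identities. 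Your route is slightly more elementary and in fact does not require $\mathcal D$ to be simply connected; the only point that could use one more word is why $\tr S(z)^k$ is analytic when $S(z)^k$ is a norm-analytic family of operators of rank $\le n$ (e.g., argue locally using a holomorphic frame of $\Ran P(z)$, or write $a_k(z)=-\frac{1}{2\pi i}\oint_{\mathcal C}\ka^k\,\tr\bigl(P(z)(T(z)-\ka)^{-1}\bigr)\,d\ka$). Otherwise the two proofs are interchangeable.
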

\begin{proof}
    Let
    $$
    P(z):=-\frac{1}{2\pi i}\oint_{\mathcal C} (T(z)-\ka I)^{-1}d\ka
    $$
    be the Riesz projection. By assumption, $n:=\rank
    P(z)=\mathrm{const}$ is finite and independent on $z$, and $P(z)$ is analytic in $\mathcal D$.
    Fix $z_0\in\mathcal D$. 
    The results of \cite[Section VII.1.3]{Kato} imply that there exists an analytic in $\mathcal D$ bounded operator-valued function $U\colon \mathcal{D}\to \mathcal{B(H)}$\footnote{$\mathcal{B(H)}$ denotes the algebra of bounded operators on a Hilbert space $H$.} such that
    $U(\cdot)^{-1}$ is also analytic in $\mathcal D$ and    $P(z)=U(z)P(z_0)U(z)^{-1}$. Take
    $$
        T_0(z):=\l.U(z)^{-1} T(z)U(z)\r|_{\ran P(z_0)}.
    $$
    The family $T_0(z)$ is an analytic operator family acting in a fixed
    finite-dimensional space that has the same eigenvalues and multiplicities as $T(z)$
    restricted to $\ran P(z)$. The monic polynomial $p_z(\ka)=(-1)^n\det(T_0(z)-\ka)$
    is the characteristic polynomial of $T_0(z)$ and
    has the coefficients analytic in $\mathcal D$ (in the variable $z$).
    Hence, its discriminant $\Delta(p_z)$ is also an analytic function in $\mathcal D$ vanishing if and only if $T_0(z)$
    (and, as a consequence, $T(z)$) has degenerate eigenvalues in the interior of $\mathcal C$.
\end{proof}
Recall that we had a special choice of basis in $\Gamma'$,
$$
b_1'=\alpha e_1,\quad b_2'=\beta e_1+e_2.
$$
Let also
$$
k=k_1 e_1+k_2 e_2, \quad k_1=r_1+i l_1,\quad k_2=r_2+i l_2.
$$
The following two theorems are the main technical statements of the paper. We postpone the proofs to the next section.

Without loss of generality, one can assume $\lambda=0$ by choosing
a different $V$. In the sequel, we will make this assumption and
{\it drop $\lambda$ from the notation for $T_1$}, that is,
$T_1(k_2):=T_1(k_2,0)$.
\begin{theorem}
    \label{tech1}
    Let $\delta>0$.
    There exist $C=C(A,V,\omega)$ and $C_1=C_1(A,V,\omega,\delta)\in 2\pi\Z$ such that operator $H(k)$ defined in \eqref{300} is invertible and satisfies
    $$
        \|H(k)^{-1}\|\le \frac{C}{|l_1| \delta^2}
    $$
    provided that $\dist(r_2,2\pi \Z)\ge \delta$, $l_1\in 2\pi \Z$, $|l_1|\ge C_1$.
    As a consequence, the horizontal lines $\im k_1=\pm C_1$ have empty intersection with
    $\sigma(T_1(k_2))$.
\end{theorem}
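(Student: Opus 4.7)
The plan is a Thomas-type estimate: I treat $H(k)$ as a perturbation of a free Fourier multiplier on the torus, bound its resolvent at the level of symbols, and then read off the statement about $T_1(k_2)$ from Proposition \ref{t1_prop}(iii). In the basis $\{e^{i\langle n,x\rangle}\}_{n\in\Gamma'}$ the model $H_0(k):=(-i\nabla+k)^2$ is a diagonal multiplier with eigenvalue $(n+k)\cdot(n+k)$. Writing $a=n_1+r_1$ and $b=n_2+r_2$, the factorization
$$
(n+k)\cdot(n+k)=\bigl[(a-l_2)+i(l_1+b)\bigr]\bigl[(a+l_2)+i(l_1-b)\bigr]
$$
gives $|(n+k)^2|\ge|l_1+b|\cdot|l_1-b|=|l_1^2-b^2|$. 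Because $n_2\in 2\pi\Z$, the hypotheses $l_1\in 2\pi\Z$ and $\dist(r_2,2\pi\Z)\ge\delta$ force $|l_1\pm b|\ge\delta$, and the triangle inequality $|l_1+b|+|l_1-b|\ge 2|l_1|$ forces at least one of these factors to exceed $|l_1|$. This yields the uniform lower bound $|(n+k)^2|\ge\delta|l_1|$ for all $n\in\Gamma'$, hence $\|H_0(k)^{-1}\|\le 1/(\delta|l_1|)$.

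Next, using $\div A=0$, I decompose $H(k)=H_{00}(k)+W$ with $H_{00}(k):=\omega^2 H_0(k)$ (so $H_{00}(k)^{-1}=H_0(k)^{-1}\omega^{-2}$) and
$$
W=-2(\omega^2 A+i\omega\nabla\omega)\cdot(-i\nabla+k)+\omega^2 A^2+2i\omega(\nabla\omega)\cdot A+V,
$$
which has at most first-order differential part with $\L^\infty$ coefficients. To invert $H(k)=H_{00}(k)(I+H_{00}(k)^{-1}W)$ via a Neumann series, I need the auxiliary symbol bound $\|(-i\partial_j+k_j)H_0(k)^{-1}\|\le C/\delta$ uniformly in $l_2\in\R$; this follows from the same factorization applied to $|n_j+k_j|/|(n+k)^2|$ at the critical modes where $n+r\approx(\mp l_2,\pm l_1)$, where both numerator and the $\delta$-constrained denominator scale like $\sqrt{l_1^2+l_2^2}$. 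Choosing $C_1\in 2\pi\Z$ large enough (depending on $\delta,A,V,\omega$) to make $\|H_{00}(k)^{-1}W\|<1/2$ then gives $\|H(k)^{-1}\|\le 2\|H_{00}(k)^{-1}\|\le C/(|l_1|\delta^2)$, the extra $\delta^{-1}$ over the free bound reflecting the first-order-derivative amplification. The last assertion is immediate from Proposition \ref{t1_prop}(iii): $k_1\in\sigma(T_1(k_2))$ iff $H(k_1,k_2)$ is not invertible, and the previous estimate rules this out on $\im k_1=\pm C_1$ for every $r_1\in\R$ and every $l_2\in\R$.

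The hardest step is the uniform convergence of the Neumann series for small $\delta$: the crude bound $\|W H_{00}^{-1}\|\le C/\delta$ on the first-order part does not shrink with $|l_1|$. To close the argument I expect one must isolate the finitely many near-critical Fourier modes --- those where $|(n+k)^2|$ attains its minimum $\sim\delta|l_1|$ --- via a Schur-complement / Grushin reduction, so that on the off-resonance subspace of $\L^2$ the symbol estimate for $(-i\nabla+k)H_0(k)^{-1}$ improves to $O(1)$ (independent of $\delta$), while the effective finite-dimensional block on the critical modes is handled by direct linear algebra. The product of the $\delta^{-1}|l_1|^{-1}$ resolvent on the critical block and the $\delta^{-1}$ amplification in the first-order coupling then produces the claimed $\delta^{-2}|l_1|^{-1}$.
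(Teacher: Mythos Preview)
Your free-operator estimate and the factorization $h_m(k)=q_m^+(k)q_m^-(k)$ are exactly what the paper uses, and the bounds $\|H_0(k)^{-1}\|\le 1/(|l_1|\delta)$ and $\|(-i\partial_j+k_j)H_0(k)^{-1}\|\le C/\delta$ are correct. But the gap you flag in your last paragraph is real and is not closed. The first-order part of $W$ gives $\|H_{00}(k)^{-1}W\|\sim C/\delta$ uniformly in $|l_1|$, so for small $\delta$ the Neumann series for $(I+H_{00}^{-1}W)^{-1}$ never converges, however large $C_1$ is taken. Your Grushin outline is plausible --- the near-critical modes are indeed finite in number per period --- but it is only a sketch; carrying out the Feshbach reduction with the multiplication operators in $W$ coupling critical and off-critical Fourier modes is genuine work, and you have not done it.

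The paper bypasses this difficulty by a different mechanism: it never treats $A$ or $\omega$ as first-order perturbations. For the magnetic potential it uses the Birman--Suslina identity for the Pauli block (Proposition~\ref{commutator}): with $\varphi$ periodic satisfying $\nabla\varphi=(A_2,-A_1)$, $w=e^{-2\varphi}$, and $B=\partial_1 A_2-\partial_2 A_1$, one has the \emph{exact} formula
\[
H(k;\one,A,B)^{-1}=e^{\varphi}H_0(k)^{-1}\bigl\{e^{-\varphi}+(-i\partial_1 w+\partial_2 w)\,Q^+(k)^{-1}e^{\varphi}\bigr\}.
\]
Your own bounds $\|H_0(k)^{-1}\|\le 1/(|l_1|\delta)$ and $\|Q^+(k)^{-1}\|\le 1/\delta$ then give $\|H(k;\one,A,B)^{-1}\|\le C/(|l_1|\delta^2)$ directly. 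Only now is a Neumann series invoked, and only to pass from the special potential $B$ to an arbitrary $V$: since $V-B$ is a \emph{bounded} multiplication operator, $\|H(k;\one,A,B)^{-1}(V-B)\|\le C\|V-B\|_{\L^\infty}/(|l_1|\delta^2)\to 0$ as $|l_1|\to\infty$, and the series converges once $|l_1|\ge C_1$. The scalar metric is handled by the conjugation $H(k;\omega^2\one,A,V)=\omega\,H(k;\one,A,\omega^{-2}V+\omega^{-1}\Delta\omega)\,\omega$ (Proposition~\ref{scalarmetric}), again an exact identity introducing only a bounded potential shift. This algebraic route --- eliminating $A$ and $\omega$ via exact factorizations before any perturbation theory, so that the Neumann step involves only zeroth-order terms --- is the ingredient your argument is missing.
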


\begin{theorem}
    \label{tech2}
    There exists $l=l(A,V,\omega)\in 2\pi\Z$ such that, for all $n\in 2\pi\Z$, the spectrum of $T_1(k_2)$ is simple for $k_2=\frac{\pi}{2}+n+i\l(\frac{\pi}{2}+l\r)\alpha$.
\end{theorem}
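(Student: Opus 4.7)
The strategy is to exploit the large imaginary part of $k_2$ and reduce the question to a perturbation of the free case ($A\equiv 0$, $V\equiv 0$, $\omega\equiv 1$), where the spectrum of $T_1^{\mathrm{free}}(k_2)$ can be computed explicitly.

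First I would analyze $T_1^{\mathrm{free}}(k_2)$ directly. Since $H^{\mathrm{free}}(k)$ is diagonalized in the Fourier basis $\{e^{in\cdot x}\}_{n\in\Gamma'}$ with eigenvalue $(n_1+k_1)^2+(n_2+k_2)^2$, the operator $T_1^{\mathrm{free}}(k_2,0)$ splits into the $2\times 2$ blocks
$$
M_n=\begin{pmatrix}0 & 1\\ -(n_1^2+(n_2+k_2)^2) & -2n_1\end{pmatrix},\qquad n=m_1 b_1'+m_2 b_2',\quad m_1,m_2\in 2\pi\Z,
$$
with eigenvalues $k_1=-n_1\pm i(n_2+k_2)$. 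For $k_2=\tfrac{\pi}{2}+n+i(\tfrac{\pi}{2}+l)\alpha$ with $l\in 2\pi\Z$, the real part of $n_2+k_2=m_2+\tfrac{\pi}{2}+n+i(\tfrac{\pi}{2}+l)\alpha$ lies in $\tfrac{\pi}{2}+2\pi\Z$ and is never zero, so each block is diagonalizable with two distinct eigenvalues. Moreover the imaginary parts of the $(\pm)$-branches populate the disjoint sets $\tfrac{\pi}{2}+2\pi\Z$ and $-\tfrac{\pi}{2}+2\pi\Z$: distinct $m_2$ within a branch give distinct imaginary parts, while distinct $m_1$ merely shift the real part by multiples of $2\pi\alpha$ and therefore identify under the $2\pi\alpha$-periodicity of Proposition \ref{t1_prop}(iv). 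Hence $\sigma(T_1^{\mathrm{free}}(k_2))$ is simple with every pair of eigenvalues separated by at least $\pi$ in imaginary part, uniformly in $l$.

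Next I would pass to the full $T_1(k_2)$. Theorem \ref{tech1} applied with $\delta=\tfrac{\pi}{2}$ (since $r_2=\tfrac{\pi}{2}+n$) confines $\sigma(T_1(k_2))$ to the strip $|\im k_1|\le C_1$, which together with the $2\pi\alpha$-periodicity in $k_1$ reduces the question to a bounded rectangle containing only finitely many eigenvalues of $T_1^{\mathrm{free}}(k_2)$. Around each such free eigenvalue $k_1^{\mathrm{free}}$ I would draw a circle $\mathcal C$ of radius less than $\pi/2$ inside which no other free eigenvalue lies, and attempt to show that the Riesz projections
$$
P^{\mathrm{free}}=-\frac{1}{2\pi i}\oint_{\mathcal C}(T_1^{\mathrm{free}}(k_2)-\kappa)^{-1}\,d\kappa,\qquad P=-\frac{1}{2\pi i}\oint_{\mathcal C}(T_1(k_2)-\kappa)^{-1}\,d\kappa
$$
satisfy $\|P-P^{\mathrm{free}}\|<1$ for all $l\in 2\pi\Z$ larger than some threshold $l(A,V,\omega)$. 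Since $\rank P^{\mathrm{free}}=1$, this would force $\rank P=1$ on every such circle, and summing over the finitely many circles exhausts $\sigma(T_1(k_2))$ inside the rectangle.

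The main obstacle is that the operator-norm difference $T_1(k_2)-T_1^{\mathrm{free}}(k_2)$ actually grows in $l$ (through the $k_2$-dependent pieces of $H(0,k_2)$, which contain a $|k_2|^2(\omega^2-1)$ term of order $l^2$), so the projector bound cannot come from a direct Neumann expansion. Instead I would use the explicit form of $(T_1(k_2)-\kappa)^{-1}$ derived in the proof of Proposition \ref{t1_prop}(ii), which expresses this resolvent in terms of $R(k,0)=H(k)^{-1}$ with $k=\kappa e_1+k_2 e_2$. On the comparison circle $\mathcal C$ both coordinates of $k$ have large imaginary parts, so $H(k)$ is dominated by its highest-order part, and the resolvent identity
$$
R(k,0)-R^{\mathrm{free}}(k,0)=-R(k,0)\bigl(H(k)-H^{\mathrm{free}}(k)\bigr)R^{\mathrm{free}}(k,0),
$$
combined with the same type of symbol-level estimates that underlie Theorem \ref{tech1}, should yield $\|R(k,0)-R^{\mathrm{free}}(k,0)\|=o(1)$ as $l\to\infty$, uniformly on $\mathcal C$. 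Plugging this into the formula for $(T_1(k_2)-\kappa)^{-1}$ then delivers $\|P-P^{\mathrm{free}}\|\to 0$ and completes the proof. Making that symbolic comparison quantitative and uniform on $\mathcal C$ is the genuinely 2D technical heart of the argument.
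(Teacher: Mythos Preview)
Your analysis of the free spectrum $\sigma(T_1^{\mathrm{free}}(k_2))=\Sigma_n$ is correct and agrees with the paper. The gap is in the second step: Theorem~\ref{tech1} does \emph{not} confine $\sigma(T_1(k_2))$ to a strip $|\im k_1|\le C_1$. It only asserts that certain horizontal lines $\im k_1\in 2\pi\Z$, $|\im k_1|\ge C_1$, miss the spectrum; between those lines there is spectrum, and indeed your own free computation shows $\Sigma_n$ contains points with $|\im k_1|$ arbitrarily large (one per row $\im k_1\in \pi/2+\pi\Z$). So there is no reduction to a bounded rectangle with finitely many free eigenvalues, and your scheme of drawing finitely many circles and comparing Riesz projections does not, as stated, cover the whole spectrum.

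What the paper does instead is precisely engineered to treat all infinitely many eigenvalues at once. It builds the ``brick wall'' $G_n$, a grid of $2\pi\alpha\times\pi$ rectangles each enclosing exactly one point of $\Sigma_n$, and proves (Lemma~\ref{tech2_aux}) that on $G_n$ the free symbol satisfies $|h_m(k)|\ge C|l|$ uniformly in $m$; this is where the large $|\im k_2|$ is actually used, through $|\re q_m^+-\re q_m^-|=2|l_2|$. Corollary~\ref{bricks} upgrades this to $\|H(k;\omega^2\one,A,V)^{-1}\|\le C/|l|$ on $G_n$. Then, rather than comparing the free and full resolvents directly, the paper runs a homotopy $T_\mu(k_2)$ with coefficients $(\mu A,\mu V,\mu\omega+(1-\mu))$, $\mu\in[0,1]$: since no eigenvalue of $T_\mu(k_2)$ can touch $G_n$, the rank of the Riesz projection onto each brick is constant in $\mu$ and stays equal to~$1$. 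Your resolvent-difference idea could in principle be pushed through, but it would require exactly the same uniform-on-$G_n$ symbol estimate, applied simultaneously to infinitely many contours; that is the piece your outline is missing.
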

\vskip 2mm \noindent {\bf Proof of Theorem \ref{main_tech}.}
Assume the contrary, \ie that the set of $k_2\in \R$ for which
$T_1(k_2)$ has real degenerate eigenvalues has a limit point
$k_2^{(0)}$ (as above we assume $\lambda=0$). Let us consider two
cases. \vskip 1mm \noindent {\it Case $1$.} Suppose that
$\dist(k_2^{(0)},2\pi \Z)>0$. Take
$\delta=\min\{\pi/2,\dist(k_2^{(0)},2\pi \Z)\}$. There exists a
single $n\in 2\pi \Z$ such that $k_2^{(0)}\in
[n+\delta,n+2\pi-\delta]$. Let $\mathcal C_0$ be a path in the
$k_2$-plane starting at $k_2^{(0)}$, then going straight towards
the point $\frac{\pi}{2}+n$, and then going vertically towards the
point $k_2^{(1)}:=\frac{\pi}{2}+n+i\l(\frac{\pi}{2}+l\r)\alpha$
from Theorem \ref{tech2}.

The points
$k_2\in \mathcal C_0$ satisfy the assumptions of Theorem \ref{tech1}.
Let us consider the eigenvalues of $T_1(k_2)$ lying within the strip $|\im k_1|< C_1$, where $C_1$ is the constant from Theorem \ref{tech1}. They form a discrete $2\pi\alpha$-periodic set. For each $k_2\in \mathcal C_0$ there exists a point $r(k_2)\in \R$ which is {\it not} a real part of any of these eigenvalues. Moreover, by continuity arguments, this also holds in a small (complex) neighbourhood of $k_2$. Let us cover $\mathcal C_0$ by a finite number of these neighbourhoods $\mathcal D_j$, $j=1,\ldots,p$, so that $k_2^{(0)}\in \mathcal D_1$ and
$k_2^{(1)}\in \mathcal D_p$
and denote the corresponding values of $r(k_2)$ by $r_j$.
For each $j$, denote by $\mathcal C_j$ the boundary of the following rectangle:
$$
r_j<\re k_1<r_j+2\pi\alpha,\quad -C_1< \im k_1< C_1.
$$
Informally speaking, each rectangle contains all eigenvalues that we are interested in: they are initially on the real line, they cannot cross the lines $\im k_1=\pm C_1$, and the pictures to the right and to the left copy the picture in the rectangle due to periodicity.

Let us apply Lemma \ref{discriminant} to each of the domains $\mathcal D_j$ and contours $\mathcal C_j$. Due to Theorem \ref{tech2}, the
spectrum of $T_1(k_2^{(1)})$ is simple, and hence the set of ``degenerate'' $k_2$ should be discrete in a neighbourhood $\mathcal D_p$ of $k_2^{(1)}$. By the standard arguments of analytic continuation, it should also be
discrete in every neighbourhood $\mathcal D_1\ldots,\mathcal D_p$. However, since $k_2^{(0)}\in \mathcal D_1$, it is not discrete in $\mathcal D_1$, which is a contradiction.

\vskip 1mm
\noindent {\it Case $2$.} Suppose that $k_2^{(0)}\in 2\pi \Z$. The set of {\it real} eigenvalues of $T_1(k_2^{(0)})$ is, again, discrete and $2\pi \alpha$-periodic. Let us surround the eigenvalues on one period by a contour $\mathcal C$ not containing any other eigenvalues. In a small neighbourhood $\mathcal D_0$ of $k_2^{(0)}$, these eigenvalues still stay within $\mathcal C$. Apply Lemma \ref{discriminant} to $\mathcal C$ and $\mathcal D_0$. Again, since the set of ``degenerate'' values of $k_2$ is not discrete in $\mathcal D_0$, it should coincide with $\mathcal D_0$, and hence there exists at least one more point with the same property that belongs to $\R\setminus 2\pi \Z$, and thus the situation reduces to Case 1.\,\qed
\section{Proofs of Theorems \ref{tech1}, \ref{tech2}}
\label{proof332}
Let us start by recalling some notation introduced above,
$$
b_1'=\alpha e_1,\quad b_2'=\beta e_1+e_2,\quad \alpha,\beta\in \R;
$$
$$
k=k_1 e_1+k_2 e_2, \quad k_1=r_1+i l_1,\quad k_2=r_2+i l_2,\quad r_1,r_2,l_1,l_2\in \R.
$$
In this section, we will emphasize the dependence of $H$ on $g,A,V$ and use the notation $H(k;g,A,V)$. Consider the free operator $H_0(k):=H(k;\one,0,0)$. Its eigenfunctions are of the form
$$
\exp\{i m\cdot x\}=\exp\{i(m_1 b_1'+m_2 b_2')\cdot(x_1 e_1+x_2 e_2)\}=\exp\{i((\alpha m_1+\beta m_2)x_1+m_2 x_2)\},
$$
$$
m = m_1 b_1' + m_2 b_2' \in \Gamma',\quad  m_1,m_2\in 2\pi \Z,
$$
and
$$
H_0(k)\exp\{i m\cdot x\}=((-i \dd_1+k_1)^2+(-i \dd_2+k_2)^2)\exp\{i m\cdot x\}=h_{m}(k)\exp\{i m\cdot x\},
$$
where $h_m(k)$ is the symbol of $H_0(k)$:
$$
h_m(k)=(\alpha m_1+\beta m_2+k_1)^2+(m_2+k_2)^2=q^+_{m}(k)q^-_m(k),
$$
$$
q^{\pm}_m(k)=\alpha m_1+\beta m_2+r_1\mp l_2+i(l_1\pm m_2\pm r_2).
$$
Let also $Q^{\pm}(k)$ be the operators with symbols $q^{\pm}_m(k)$ respectively, so that $H_0(k)=Q^+(k)Q^-(k)$.
Suppose that the magnetic potential $A$ satisfies \eqref{a_cond}. Then there exists a $\Gamma$-periodic scalar function $\varphi\in C_{\per}^2(\Omega)$ such that
\beq
\label{phi_prop}
(\nabla\varphi)(x)=A_2(x)e_1-A_1(x)e_2,\quad \int_\Omega\varphi(x)\,dx=0,\quad \|\varphi\|_{\c^2(\Omega)}\le C \|A\|_{\c^1(\Omega)}.
\eeq
Let also
$$
B(x)=\dd_1 A_2(x)-\dd_2 A_1(x), \quad w(x):=e^{-2\varphi(x)}.
$$
The operator $H(k;\one,A,B)$ is called {\it the Pauli operator} (more precisely, a block of the Pauli operator).
The following is proved in \cite{BSu} and allows us to reduce the case of the magnetic
potential, essentially, to the case of the free operator.
\begin{prop}
\label{commutator}
Under the above assumptions, if $Q^{\pm}(k)$ are invertible, then $H(k;\one,A,B)$ is also invertible, and
\begin{multline}
\label{commutator_f}
H(k;\one,A,B)^{-1}=e^{\varphi}Q^-(k)^{-1}e^{-2\varphi}Q^+(k)^{-1}e^{\varphi}=\\
=e^{\varphi(x)}H_0(k)^{-1} \l\{e^{-\varphi}+(-i\dd_1w+\dd_2w)Q^+(k)^{-1}e^{\varphi}\r\}.
\end{multline}
\end{prop}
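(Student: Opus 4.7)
The plan is to derive both formulas from a Pauli-type factorization of $H(k;\one,A,B)$, and then to pass from the first representation to the second by a single commutator identity. Every step is a direct algebraic manipulation once the right factorization is in hand.

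First I would set $X_k := -i\dd_1 + k_1 - A_1$, $Y_k := -i\dd_2 + k_2 - A_2$. On periodic functions the formal-adjoint convention in \eqref{hk_def} collapses $(-i\nabla + \bar k - A)^*$ into $-i\nabla + k - A$ (since $(-i\dd_j)^* = -i\dd_j$, while multiplication by $\bar k_j$ has adjoint multiplication by $k_j$), so $H(k;\one,A,B) = X_k^2 + Y_k^2 + B$. A direct commutator computation gives $[X_k,Y_k] = i(\dd_1 A_2 - \dd_2 A_1) = iB$, whence
\[
(X_k + iY_k)(X_k - iY_k) = X_k^2 + Y_k^2 - i[X_k,Y_k] = H(k;\one,A,B).
\]
Using $\dd_1\varphi = A_2$, $\dd_2\varphi = -A_1$ from \eqref{phi_prop} together with $e^{\mp\varphi}\dd_j e^{\pm\varphi} = \dd_j \pm \dd_j\varphi$, one checks directly that $e^{-\varphi} Q^+(k) e^{\varphi} = X_k + iY_k$ and $e^{\varphi} Q^-(k) e^{-\varphi} = X_k - iY_k$. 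Multiplying these two identities yields the clean factorization $H(k;\one,A,B) = e^{-\varphi}\, Q^+(k)\, e^{2\varphi}\, Q^-(k)\, e^{-\varphi}$. Since $e^{\pm\varphi}$ are bounded with bounded inverses and $Q^{\pm}(k)$ are invertible by hypothesis, $H(k;\one,A,B)$ is invertible and inverting the product gives the first equality of \eqref{commutator_f}.

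For the second equality, the first-order part of $Q^+(k)$ is $-i\dd_1 + \dd_2$, so $[Q^+(k), w] = -i(\dd_1 w) + (\dd_2 w)$ as a multiplication operator. Conjugating both sides by $Q^+(k)^{-1}$ and applying the resulting identity to $e^{\varphi}$, while using $w e^{\varphi} = e^{-\varphi}$, produces
\[
w\, Q^+(k)^{-1} e^{\varphi} = Q^+(k)^{-1}\bigl\{\, e^{-\varphi} + (-i\dd_1 w + \dd_2 w)\, Q^+(k)^{-1} e^{\varphi}\,\bigr\}.
\]
Left-multiplying by $e^{\varphi} Q^-(k)^{-1}$ and recognizing that $Q^-(k)^{-1} Q^+(k)^{-1} = H_0(k)^{-1}$ converts the first form of $H(k;\one,A,B)^{-1}$ into the second.

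I do not expect a real obstacle. The only delicate points are the sign bookkeeping in the Pauli identity (entirely forced by $[X_k,Y_k] = iB$) and the careful interpretation of $(-i\nabla + \bar k - A)^*$ for complex $k$; both are routine. All intermediate operators are well-defined on the natural Sobolev scale of periodic functions, since $Q^{\pm}(k)^{-1}$ send $\L^2(\Omega)$ boundedly into $\H^1_{\per}(\Omega)$, and $e^{\pm\varphi}$, $w$, $\nabla w$ are bounded multipliers by \eqref{phi_prop}.
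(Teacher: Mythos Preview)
Your argument is correct. The paper does not actually prove this proposition; it merely cites \cite{BSu} for it, so there is no ``paper's own proof'' to compare against. Your derivation via the Pauli factorization $H(k;\one,A,B)=(X_k+iY_k)(X_k-iY_k)$, the conjugation identities $e^{-\varphi}Q^+(k)e^{\varphi}=X_k+iY_k$ and $e^{\varphi}Q^-(k)e^{-\varphi}=X_k-iY_k$, and the commutator $[Q^+(k),w]=-i\dd_1 w+\dd_2 w$ is exactly the standard route from \cite{BSu}, and all signs and domain considerations check out.
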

\noindent The following proposition can also be easily verified, see \cite{BSu99}.
It will be used to reduce the case of a scalar metric $g=\omega^2\one$ to the case $g=\one$.
\begin{prop}
Suppose that $\omega\in \c_{\per}^2(\Omega)$, $V\in \L^{\infty}(\Omega)$, $A\in \c_{\per}^1(\Omega)$. Then
\label{scalarmetric}
\beq
\label{521}
H(k;\omega^2\one,A,V)=\omega H(k;\one,A,\omega^{-2}V+\omega^{-1}\Delta \omega)\omega,
\eeq
\beq
\label{522}
\omega H(k;\one,A,V)\omega=H(k;\omega^2\one,A,\omega^2 V-\omega\Delta\omega).
\eeq
\end{prop}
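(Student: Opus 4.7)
The plan is to verify the two identities of Proposition \ref{scalarmetric} by direct computation. First I would observe that \eqref{521} and \eqref{522} are equivalent: substituting $V \mapsto \omega^2 V - \omega\Delta\omega$ in \eqref{521} turns the argument of $H(k;\one,A,\cdot)$ into $V$, so \eqref{522} follows. I focus therefore on \eqref{521}.

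Introduce the shorthand $P_k := -i\nabla + k - A$ and write $P_k^{\#}$ for the formal adjoint of $P_{\bar k}$, which equals $-i\div + (k-A)\cdot$ (the conjugate of $\bar k$ returns $k$, and $A$ is real). With this notation, $H(k;\omega^2\one,A,V)u = P_k^{\#}(\omega^2 P_k u) + V u$ and $H(k;\one,A,V')v = P_k^{\#} P_k v + V' v$. Since $\omega \in \c_{\per}^2(\Omega)$ is a real scalar function, two Leibniz-type identities hold:
\begin{align*}
P_k(\omega u) &= \omega P_k u - i(\nabla \omega)\, u, \\
P_k^{\#}(\omega \mathbf{w}) &= \omega P_k^{\#} \mathbf{w} - i(\nabla \omega)\cdot \mathbf{w}.
\end{align*}

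Setting $v := \omega u$ and using the first identity, I would write $\omega^2 P_k u = \omega P_k v + i(\nabla\omega) v$. Applying $P_k^{\#}$ to both sides and expanding via the second Leibniz identity yields
\begin{equation*}
P_k^{\#}(\omega^2 P_k u) = \omega P_k^{\#} P_k v - i(\nabla\omega)\cdot P_k v + i P_k^{\#}\bigl((\nabla\omega) v\bigr).
\end{equation*}
Expanding $-i(\nabla\omega)\cdot P_k v = -(\nabla\omega)\cdot\nabla v - i(\nabla\omega)\cdot(k-A)\,v$ and $i P_k^{\#}\bigl((\nabla\omega) v\bigr) = (\Delta\omega) v + (\nabla\omega)\cdot\nabla v + i(k-A)\cdot(\nabla\omega)\,v$, the first-order cross terms cancel pairwise, leaving the clean identity
\begin{equation*}
P_k^{\#}(\omega^2 P_k u) = \omega\, P_k^{\#} P_k (\omega u) + \omega(\Delta\omega)\, u.
\end{equation*}
Adding $V u$ to both sides and rewriting $V u + \omega(\Delta\omega) u = \omega\bigl(\omega^{-2}V + \omega^{-1}\Delta\omega\bigr)(\omega u)$ finishes the proof of \eqref{521}.

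The computation is purely algebraic and contains no substantive obstacle; the only point requiring care is the signed cancellation of the mixed first-order terms in $\nabla v$ and $(k-A)v$, which appear with opposite signs in the two expansions above. The conceptual content is that conjugation by the scalar multiplier $\omega$ changes the metric from $\one$ to $\omega^2\one$ at the cost of only a zero-order potential correction $\omega\Delta\omega$, and this is what makes the scalar-metric problem reducible to the flat-metric problem in the main argument of Section 6.
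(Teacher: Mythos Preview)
Your computation is correct in every detail: the Leibniz identities for $P_k$ and $P_k^{\#}$ are right, the cross terms in $\nabla v$ and $(k-A)v$ cancel exactly as you say, and the substitution $V\mapsto \omega^2 V-\omega\Delta\omega$ does yield \eqref{522} from \eqref{521}. The paper itself does not prove this proposition---it merely states that it ``can also be easily verified'' and refers to \cite{BSu99}---so your direct verification is entirely appropriate and nothing further is needed.
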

\vskip 2mm
\noindent {\bf Proof of Theorem \ref{tech1}.} Suppose that $\dist(r_2,2\pi \Z)=\delta$. Since $l_1\pm m_2\in 2\pi \Z$, we have $|q^{\pm}_m(k)|\ge \delta$. In addition, $\im q_m^+(k)+\im q_m^-(k)=2 l_1$, and hence we either have $|q^+_m(k)|\ge |l_1|$ or $|q^-_m(k)|\ge |l_1|$. Combining these estimates, we obtain $|h_m (k)| \ge |l_1| \delta$, and
\beq
\label{h0q}
\|H_0(k)^{-1}\|\le \frac{1}{|l_1|\delta},\quad \|Q^+(k)^{-1}\|\le \frac{1}{\delta},
\eeq
which completes the proof for $A=0$, $V=0$, $\omega=1$. If $A\neq 0$
and $V(x)=B(x)$, then, from \eqref{commutator_f} and \eqref{h0q}, we get
\beq
\label{hkbound}
\|H(k;\one,A,B)^{-1}\|\le \frac{C}{|l_1|\delta^2},
\eeq
where $C$ depends on $A$ via $w$ and $\varphi$. The standard Neumann series
arguments imply that the bound \eqref{hkbound} holds for the operator $H(k,\one,A,V)$ with  arbitrary $V\in \L^{\infty}(\Omega)$ (and maybe a different $C$) for sufficiently large $l_1$, say,
$$
|l_1|\ge \frac{2\|V-B\|_{\L^{\infty}(\Omega)}C}{\delta^2}.
$$
The case of arbitrary $\omega$ follows from Proposition \ref{scalarmetric}.
\,\qed
\vskip 2mm

We now make some preparations for the proof of Theorem \ref{tech2}. Fix $k_2$ as in the formulation of the theorem, so that
\beq
\label{choice_k2}
r_2=\frac{\pi}{2}+n,\quad l_2=\l(\frac{\pi}{2}+l\r)\alpha, \quad  l,n\in 2\pi  \Z.
\eeq
For these $k_2$, define
$$
\Sigma_n:=\{k_1\in \C\colon h_m(k_1,k_2)=0\,\text{ for some }m_1,m_2\in 2\pi \Z\}.
$$
In other words, it is the set of $k_1$ for which $H_0(k_1,k_2)$ is not invertible. A simple computation shows that
$\Sigma_n$ consists of points $r_1+i l_1$ of the following form:
\beq
\label{sigma_desc}
\begin{cases}
    r_1=-\alpha m_1-\beta m_2 \mp \l(\frac{\pi}{2}+l\r)\alpha\\
    l_1=\pm \l(\frac{\pi}{2}+n+ m_2\r)
\end{cases},\quad m_1,m_2\in 2\pi\Z.
\eeq
Since one can replace the variables $m_1$ by $m_1+l$, one can see that {\it the set $\Sigma_n$ does not depend on $l$}.

Let us describe the set $\Sigma_n$ in more detail. First of all, it is easy to see that different values of $(m_1,m_2)$ give different points of $\Sigma_n$,
as $m_2$ and the signs are uniquely determined by the value of $l_1$,
and $m_1$ is determined by $r_1$ afterwards. Next, the set $\Sigma_n$ lies on the union of horizontal lines $\im k_1\in \pi/2+\pi \Z$. On each line, it is a sequence of equally spaced points with the
spacings $2\pi \alpha$.

We will also need another set $G_n$ defined by
$$
G_n:=(\R+i \pi \Z)\cup \bigcup_{z\in \Sigma_n}\l(z+\pi\alpha+i\l[-\frac{\pi}{2},\frac{\pi}{2}\r]\r).
$$
The set $G_n$ consists of horizontal lines $\im k_1 \in \pi \Z$ separating the horizontal lines of $\Sigma_n$. In addition, for each point of $\Sigma_n$, we include a vertical line segment of the length $\pi$ separating this point from the next point of $\Sigma_n$ lying on the same line. One can imagine $G_n$ as a ``brick wall'' consisting of rectangles such that there is exactly one element of $\Sigma_n$ inside of each rectangle.

\begin{figure}[H]
\begin{tikzpicture}[scale=0.8]
\draw [->](0,0) -- (16,0);
\draw [->](0,-7) -- (0,7);
\node[above left] at (0,0){$O$};
\node[left]at (0,7){$\im k_1$};
\node at (0,0) {$\times$};
\node [below] at (16,0){$\re k_1$};

\draw [<->] (2.2,-6.5)--(5.2,-6.5);
\node [below]at (3.7,-6.5) {$2\pi\alpha$};
\draw [dashed](-1,1)--(2,1);
\node [left] at (-1,1) {$\pi/2$};
\draw [dashed](4.3,-1.5) -- (4.3,3);
\draw [dashed](4,-1.5) -- (4,-1);
\draw [<->] (4.3,-1.5)--(4,-1.5);
\node [below] at (4.15,-1.4) {$2\pi\beta$};

\node [left] at (-1,0) {$0$};
\node [left] at (-1,2) {$\pi$};
\node [left] at (-1,4) {$2\pi$};
\node [left] at (-1,6) {$3\pi$};
\node [left] at (-1,-2) {$-\pi$};
\node [left] at (-1,-4) {$-2\pi$};
\node [left] at (-1,-6) {$-3\pi$};

\draw[fill] (2.3,-3) circle [radius=0.1];
\draw[fill] (5.3,-3) circle [radius=0.1];
\draw[fill] (8.3,-3) circle [radius=0.1];
\draw[fill] (11.3,-3) circle [radius=0.1];
\draw[fill] (14.3,-3) circle [radius=0.1];

\draw[fill] (2,1) circle [radius=0.1];
\draw[fill] (5,1) circle [radius=0.1];
\draw[fill] (8,1) circle [radius=0.1];
\draw[fill] (11,1) circle [radius=0.1];
\draw[fill] (14,1) circle [radius=0.1];

\draw[fill] (1.7,5) circle [radius=0.1];
\draw[fill] (4.7,5) circle [radius=0.1];
\draw[fill] (7.7,5) circle [radius=0.1];
\draw[fill] (10.7,5) circle [radius=0.1];
\draw[fill] (13.7,5) circle [radius=0.1];

\draw (1.3,3) circle [radius=0.1];
\draw (4.3,3) circle [radius=0.1];
\draw (7.3,3) circle [radius=0.1];
\draw (10.3,3) circle [radius=0.1];
\draw (13.3,3) circle [radius=0.1];

\draw (1,-1) circle [radius=0.1];
\draw (4,-1) circle [radius=0.1];
\draw (7,-1) circle [radius=0.1];
\draw (10,-1) circle [radius=0.1];
\draw (13,-1) circle [radius=0.1];

\draw (0.7,-5) circle [radius=0.1];
\draw (3.7,-5) circle [radius=0.1];
\draw (6.7,-5) circle [radius=0.1];
\draw (9.7,-5) circle [radius=0.1];
\draw (12.7,-5) circle [radius=0.1];
\draw [ultra thick] (-1,2)--(15,2);
\draw [ultra thick] (-1,0)--(15,0);
\draw [ultra thick] (-1,-2)--(15,-2);
\draw [ultra thick] (-1,-4)--(15,-4);
\draw [ultra thick] (-1,4)--(15,4);
\draw [ultra thick] (-1,6)--(15,6);
\draw [ultra thick] (-1,-6)--(15,-6);
\draw [ultra thick] (0.2,4)--(0.2,6);
\draw [ultra thick] (3.2,4)--(3.2,6);
\draw [ultra thick] (6.2,4)--(6.2,6);
\draw [ultra thick] (9.2,4)--(9.2,6);
\draw [ultra thick] (12.2,4)--(12.2,6);

\draw [ultra thick] (0.5,0)--(0.5,2);
\draw [ultra thick] (3.5,0)--(3.5,2);
\draw [ultra thick] (6.5,0)--(6.5,2);
\draw [ultra thick] (9.5,0)--(9.5,2);
\draw [ultra thick] (12.5,0)--(12.5,2);

\draw [ultra thick] (0.8,-4)--(0.8,-2);
\draw [ultra thick] (3.8,-4)--(3.8,-2);
\draw [ultra thick] (6.8,-4)--(6.8,-2);
\draw [ultra thick] (9.8,-4)--(9.8,-2);
\draw [ultra thick] (12.8,-4)--(12.8,-2);
\draw [ultra thick] (-0.2,4)--(-0.2,2);
\draw [ultra thick] (2.8,4)--(2.8,2);
\draw [ultra thick] (5.8,4)--(5.8,2);
\draw [ultra thick] (8.8,4)--(8.8,2);
\draw [ultra thick] (11.8,4)--(11.8,2);
\draw [ultra thick] (14.8,4)--(14.8,2);

\draw [ultra thick] (-0.5,0)--(-0.5,-2);
\draw [ultra thick] (2.5,0)--(2.5,-2);
\draw [ultra thick] (5.5,0)--(5.5,-2);
\draw [ultra thick] (8.5,0)--(8.5,-2);
\draw [ultra thick] (11.5,0)--(11.5,-2);
\draw [ultra thick] (14.5,0)--(14.5,-2);

\draw [ultra thick] (-0.8,-4)--(-0.8,-6);
\draw [ultra thick] (2.2,-4)--(2.2,-6);
\draw [ultra thick] (5.2,-4)--(5.2,-6);
\draw [ultra thick] (8.2,-4)--(8.2,-6);
\draw [ultra thick] (11.2,-4)--(11.2,-6);
\draw [ultra thick] (14.2,-4)--(14.2,-6);
\end{tikzpicture}
\caption{The sets $\Sigma_n$ and $G_n$.}
\label{sigmafig}
\end{figure}
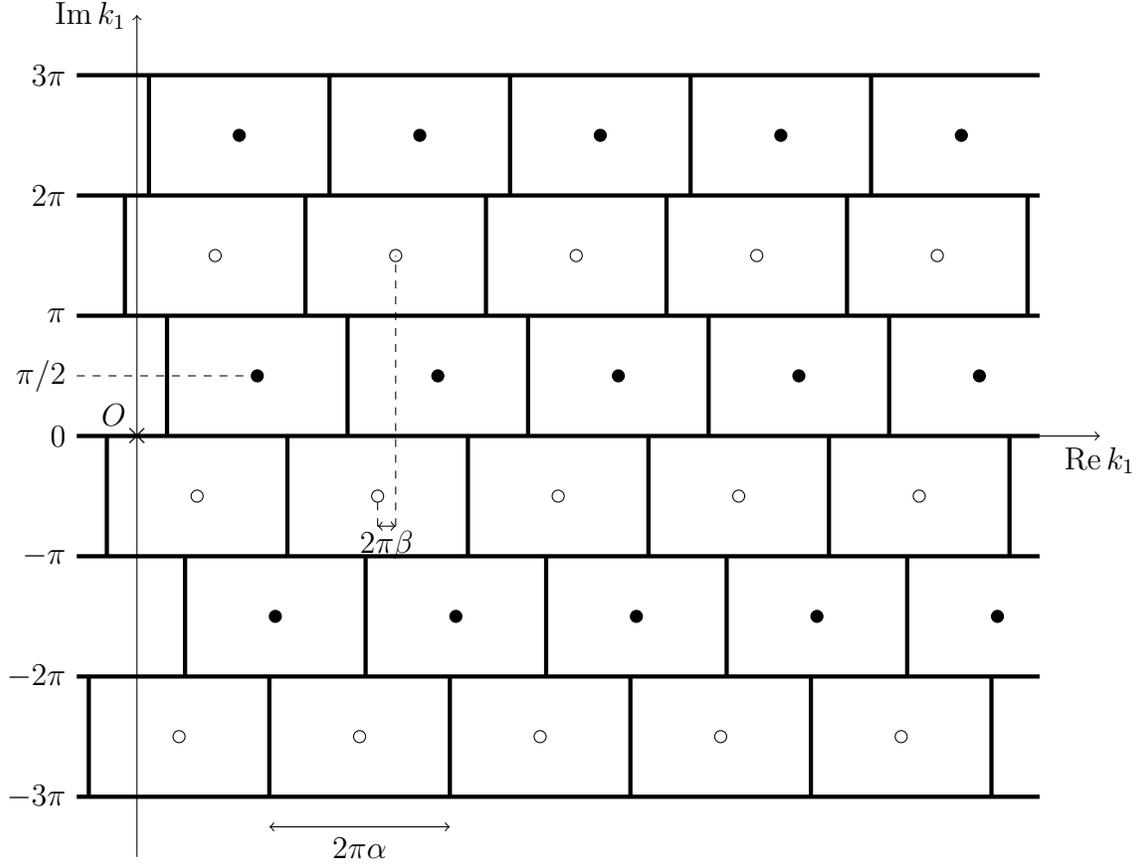
On Figure \ref{sigmafig}, an example of $\Sigma_n$ and $G_n$ is shown for $n=0$, $\alpha=0.75$, $\beta=0.075$. The set $G_n$ is represented by thick lines, and the locations of points of $\Sigma_n$ are indicated by black and white circles,
corresponding to the upper or lower choice of signs in \eqref{sigma_desc}, respectively.
\begin{lemma}
    \label{tech2_aux}
    Suppose that $k_1\in G_n$, $k_2=\frac{\pi}{2}+n+i\l(\frac{\pi}{2}+l\r)\alpha$, where $l,n\in 2\pi \Z$. Then
    $$
        |h_m(k)|\ge C|l|
    $$
    uniformly in $m_1,m_2\in 2\pi \Z$.
\end{lemma}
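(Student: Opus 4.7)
The plan is to exploit the factorization $h_m(k) = q_m^+(k)\, q_m^-(k)$ and reduce the bound to two estimates. First, a direct computation gives $q_m^+(k) - q_m^-(k) = -2l_2 + 2i(m_2+r_2)$, so $|q_m^+(k)-q_m^-(k)| \ge 2|l_2|$ and hence
$$
\max\bigl(|q_m^+(k)|,\, |q_m^-(k)|\bigr) \ge |l_2| = \bigl|\tfrac{\pi}{2}+l\bigr|\alpha,
$$
which is at least a positive multiple of $|l|$ for $l\in 2\pi\Z$ (and the lemma is trivial when $l=0$). The remaining task is to establish a uniform positive lower bound
$$
\min\bigl(|q_m^+(k)|,\, |q_m^-(k)|\bigr) \ge c_0
$$
for every $m\in(2\pi\Z)^2$ and every $k_1\in G_n$, with $c_0=c_0(\alpha)>0$. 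Combining these two bounds immediately yields $|h_m(k)| = |q_m^+(k)|\cdot|q_m^-(k)| \ge c_0|l_2| \ge C|l|$.

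For the uniform lower bound I would split according to whether $k_1$ lies on a horizontal line or on a vertical segment of $G_n$. On the horizontal pieces one has $l_1 \in \pi\Z$, and combined with $m_2\in 2\pi\Z$ and $r_2=\frac{\pi}{2}+n$ with $n\in 2\pi\Z$, both imaginary parts $\im q_m^{\pm}(k) = l_1 \pm(m_2+r_2)$ lie in $\frac{\pi}{2}+\pi\Z$; hence $|q_m^{\pm}(k)|\ge \frac{\pi}{2}$.

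On a vertical segment, write $k_1 = z+\pi\alpha+iy$ with $z\in \Sigma_n$ and $y\in[-\frac{\pi}{2},\frac{\pi}{2}]$. By \eqref{sigma_desc}, $z$ carries data $(m_1^z,m_2^z)\in (2\pi\Z)^2$ and a sign $\sigma\in\{+,-\}$ encoding which of $q_{m^z}^{\sigma}(z,k_2)$ vanishes. Taking $\sigma=-$ (the case $\sigma=+$ is symmetric), substitution produces
$$
q_m^-(k) = \alpha(m_1-m_1^z+\pi)+\beta(m_2-m_2^z) + i(y+m_2^z-m_2),
$$
while $\im q_m^+(k) = y+\pi+(m_2+m_2^z+2n) \in y+\pi+2\pi\Z$, so $|\im q_m^+|\ge \pi-|y|\ge \frac{\pi}{2}$, giving $|q_m^+|\ge \frac{\pi}{2}$. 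For $q_m^-$: if $m_2\ne m_2^z$ then $|\im q_m^-|\ge 2\pi-|y|\ge \frac{3\pi}{2}$; if $m_2=m_2^z$ then $\im q_m^-=y$ may be small, but $\re q_m^-$ reduces to $\alpha(m_1-m_1^z+\pi)$, a nonzero odd multiple of $\pi\alpha$ (since $m_1-m_1^z\in 2\pi\Z$), so $|\re q_m^-|\ge \pi|\alpha|$. In either subcase $|q_m^-|\ge \min(\pi|\alpha|,\frac{3\pi}{2})$, and the uniform bound follows with $c_0 = \min(\frac{\pi}{2},\pi|\alpha|)$.

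The one place that really requires care is the vertical-segment analysis, and its success hinges on the horizontal shift by $\pi\alpha$ built into the definition of $G_n$. That shift is precisely what turns $\alpha(m_1-m_1^z)\in 2\pi\alpha\Z$, which would vanish for $m_1=m_1^z$, into $\alpha(m_1-m_1^z+\pi)\in \pi\alpha(2\Z+1)$, bounded away from zero uniformly in $m_1$; without it the real and imaginary parts of $q_m^-$ could vanish simultaneously and no uniform lower bound would hold.
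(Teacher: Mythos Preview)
Your proof is correct and follows the same strategy as the paper: factor $h_m=q_m^+ q_m^-$, bound $\max(|q_m^+|,|q_m^-|)\ge |l_2|$ from the difference of the two factors, and then establish a uniform positive lower bound on $\min(|q_m^+|,|q_m^-|)$. The paper compresses your case analysis for the second step into a single observation: each $|q_m^{\pm}(k)|$ is literally the distance from $k_1$ to a specific point of $\Sigma_n$, so the desired lower bound is just $\dist(G_n,\Sigma_n)=\min\{\pi/2,\pi\alpha\}$ --- exactly the constant your explicit computation produces.
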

\begin{proof}
    Since $|\re q_m^+(k)-\re q_m^-(k)|=2|l_2|\ge C|l|$, we have for each $m$ either $|q_m^+(k)|\ge \frac12 C|l|$ or $|q_m^-(k)|\ge \frac12 C|l|$. On the other hand, both $|q_m^+(k)|$ and $|q_m^-(k)|$ are distances from $k_1$ to a certain point of $\Sigma_n$, which is bounded from below by a positive constant,
\beq
\label{qlower}
        |q_m^{\pm}(k)|\ge \dist(k_1,\Sigma_n)\ge \dist(G_n,\Sigma_n)=\min\l\{\frac{\pi}{2},\pi\alpha\r\}.
\eeq
    The combination of these estimates completes the proof of the lemma.
\end{proof}
\begin{remark}
Lemma \ref{tech2_aux} is the main ingredient of the proof that relies on the assumption $d=2$. In $d\ge 3$, one cannot construct a set $G_n$ with similar properties and constant size of the bricks.
\end{remark}
\begin{cor}
\label{bricks}
Under the assumptions of Lemma $\ref{tech2_aux}$, there exists $L_0(A,V,\omega)>0$ such that, if $|l|>L_0(\omega,A,V)$, then the operator $H(k;\omega^2\one,A,V)$ is invertible and 
$$
\|H(k;\omega^2\one,A,V)^{-1}\| \le \frac{C(\omega,A,V)}{|l|},
$$
where the constants $C$ and $L_0$ depend only on $\|A\|_{\c^1(\Omega)}$, $\|V\|_{\L^{\infty}(\Omega)}$, $\|\omega\|_{\c^2(\Omega)}$ and on the constant $m_g$ from $\eqref{301}$.
\end{cor}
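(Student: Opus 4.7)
The plan is to chain together three reductions, each provided by a proposition or lemma already in hand. First, I would use Proposition \ref{scalarmetric}, identity \eqref{521}, to reduce the scalar-metric case to the flat-metric case:
$$
H(k;\omega^2\one,A,V)^{-1}=\omega^{-1}\,H(k;\one,A,\wt V)^{-1}\,\omega^{-1},\qquad \wt V:=\omega^{-2}V+\omega^{-1}\Delta\omega.
$$
Since $\omega\ge\sqrt{m_g}>0$ and $\omega\in\c^2_{\per}(\Omega)$, the factors $\omega^{-1}$ are bounded in $\L^\infty$ by $m_g^{-1/2}$, and $\|\wt V\|_{\L^\infty}\le C(\|\omega\|_{\c^2},\|V\|_{\L^\infty},m_g)$. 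It therefore suffices to bound $\|H(k;\one,A,\wt V)^{-1}\|$ by $C/|l|$.

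Second, I would reduce the general $\wt V$ to the distinguished choice $V=B$ by a Neumann series. Writing $U:=\wt V-B\in\L^\infty(\Omega)$, we have
$$
H(k;\one,A,\wt V)=H(k;\one,A,B)+U,
$$
so if we establish $\|H(k;\one,A,B)^{-1}\|\le C_0(A)/|l|$, then for
$$
|l|\ge L_0:=2C_0(A)\,\|U\|_{\L^\infty}
$$
the Neumann series $(I+H(k;\one,A,B)^{-1}U)^{-1}$ converges, giving invertibility and $\|H(k;\one,A,\wt V)^{-1}\|\le 2C_0(A)/|l|$. The required bound for the Pauli block $H(k;\one,A,B)$ comes from Proposition \ref{commutator}, the formula \eqref{commutator_f}:
$$
H(k;\one,A,B)^{-1}=e^{\varphi}H_0(k)^{-1}\bigl\{e^{-\varphi}+(-i\dd_1w+\dd_2w)\,Q^+(k)^{-1}\,e^{\varphi}\bigr\}.
$$
Lemma \ref{tech2_aux} gives $\|H_0(k)^{-1}\|\le C/|l|$, the second estimate in \eqref{qlower} gives $\|Q^+(k)^{-1}\|\le C$ (a constant depending only on the lattice), and $\varphi$, $w$ are $\c^1$ functions with norms controlled by $\|A\|_{\c^1(\Omega)}$ via \eqref{phi_prop}. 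Multiplying these bounds together yields $\|H(k;\one,A,B)^{-1}\|\le C_0(A)/|l|$, as needed.

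Combining the three reductions gives the stated bound $\|H(k;\omega^2\one,A,V)^{-1}\|\le C(\omega,A,V)/|l|$ once $|l|>L_0(\omega,A,V)$, with both constants depending only on the quantities listed in the statement. There is no genuine obstacle: all of the ``hard'' analysis is already encapsulated in Lemma \ref{tech2_aux} (which rules out small denominators in the symbol $h_m(k)$ on the brick-wall contour $G_n$) and in Propositions \ref{commutator} and \ref{scalarmetric}. The only thing to watch is that the factor $|l|^{-1}$ survives each reduction, which it does because every additional multiplicative factor introduced along the way is $|l|$-independent and bounded by the $\c^1$- and $\L^\infty$-norms of $A$, $V$, and $\omega$.
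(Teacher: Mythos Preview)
Your proposal is correct and follows essentially the same approach as the paper: reduce the scalar metric to the flat metric via Proposition~\ref{scalarmetric}, bound the Pauli block $H(k;\one,A,B)^{-1}$ using Proposition~\ref{commutator} together with Lemma~\ref{tech2_aux} and \eqref{qlower}, and then pass from $B$ to the general potential by a Neumann series. The paper's proof is organized in the same order with the same ingredients.
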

\begin{proof}
From Proposition \ref{scalarmetric}, we have
$$
\|H(k;\omega^2\one,A,V)^{-1}\|\le m_g^{-2}\|H(k;\one,A,V_{\omega})^{-1}\|,
$$
where
$$
V_\omega = \omega^{-2} V + \omega^{-1} \Delta\omega
$$
and therefore
$$
\|V_\omega\|_{L^\infty(\Omega)} \le
m_g^{-2} \|V\|_{L^\infty(\Omega)} +
m_g^{-1} \|\omega\|_{C^2(\Omega)}.
$$
From \eqref{commutator_f},\eqref{phi_prop}, Lemma \ref{tech2_aux}, and \eqref{qlower}, we have
$$
\|H(k;\one,A,B)^{-1}\|\le C(A)\|H_0(k)^{-1}\|\le
\frac{C_1(A)}{|l|},
$$
where $C(A)$, $C_1(A)$ depend only on $\|A\|_{\c^1(\Omega)}$. Since $\|B\|_{\L^{\infty}(\Omega)}\le 2\|A\|_{\c^1(\Omega)}$, we can use the same Neumann series argument as in the proof of Theorem \ref{tech1} to replace $B$ by $V_{\omega}$.
\end{proof}
\vskip 2mm
\noindent {\bf Proof of Theorem \ref{tech2}.} Denote by $T_{\mu}(k_2)$ the operator $T_1(k_2)$ with $V$, $A$, $\omega$ replaced by $\mu V$, $\mu A$ and $\mu\omega+(1-\mu)$ respectively. It is a one-parametric family connecting the ``free'' operator $T_0(k_2)$ with $T_1(k_2)$.

It is easy to see that $\sigma(T_0(k_2))=\Sigma_n$, because $\Sigma_n$ is exactly the set
of $k_1\in \C$ for which the symbol of $H_0(k)$ is not invertible. Moreover, an easy computation shows that, for each $k_1\in \Sigma_n$, the corresponding eigenspace is one-dimensional and is spanned by $\begin{pmatrix}
e^{im\cdot x}\\k_1 e^{im\cdot x}
\end{pmatrix}$,
where $m$ is determined by $k_1$ via \eqref{sigma_desc}. Note that each value of $m$ appears twice (for two different values of $k_1$)
because of two possible signs. Hence, the total collection of eigenvectors spans $\H^1_{\per}(\Omega)\oplus \L^2(\Omega)$, so there are no Jordan cells and {\it the spectrum of $T_0(k_2)$ is simple}.

It remains to prove that $T_1(k_2)$ also has simple spectrum. Consider the Riesz projection of $T_{\mu}(k_2)$ with respect to the boundary of some rectangle of $G_n$. For $\mu=0$, the rectangle contains exactly one simple eigenvalue, and the range of the projection has dimension 1. Let us increase $\mu$. The only way for the dimension of the range to change is to have an eigenvalue of $T_{\mu}(k_2)$ approach the set $G_n$. This, however, is impossible for $\mu\in [0,1]$ due to Corollary \ref{bricks}, and hence the eigenvalues of $T_1(k_2)$ stay simple.\,\qed
\begin{remark}
The proof of Theorem \ref{tech2} is based on the ideas of \cite[Section VI]{HR}.
\end{remark}
\section{The case of variable metric}
\label{variablemetric}
In this section we show how to reduce the case of an operator with arbitrary metric $g$ satisfying \eqref{g_cond} to the case of the scalar metric. The technical difference with
standard arguments such as in \cite{Sht1} is that we need to keep track of the quasimomentum in order to ensure that it is transformed linearly. This is done by an
additional ``gauge transformation''. The following proposition establishes the existence of {\it global isometric coordinates} in which the metric $g$ becomes scalar. See \cite[Proposition 18]{KuL} for the proof.
\begin{prop}
    \label{conformal}
    Suppose that $g$ satisfies $\eqref{g_cond}$. Then there exists a basis $b_1^*,b_2^*$ of $\R^2$ and a one-to-one map $\Psi\colon \R^2\to \R^2$, $\Psi\in \c^3(\R^2)$, $\det\Psi'(x)\neq 0$,
    $$
    \Psi(0)=0,\quad \Psi(x+n_1 b_1+n_2 b_2)=\Psi(x)+n_1 b_1^*+n_2 b_2^*,\quad\forall n_1, n_2\in \Z,
$$
    such that
\beq
\label{g_relation}
        |\det\Psi'(x)|^{-1}\Psi'(x)g(x)\Psi'(x)^t=\omega^2(\Psi(x))\one,
\eeq
where $\omega\in C^2(\R^2)$ is a strictly positive scalar function periodic with respect to the lattice $\Gamma_*$ spanned by $b_1^*$ and $b_2^*$.
\end{prop}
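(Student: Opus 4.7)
The plan is to reduce Proposition \ref{conformal} to the classical existence of global isothermal coordinates for a smooth Riemannian metric on the two-torus. Since $g$ is $\G$-periodic, it descends to a $\c^2$ Riemannian metric $\tilde g$ on the compact torus $T=\R^2/\G$, and our task becomes that of exhibiting a conformal (in the Riemannian sense) diffeomorphism of $T$ onto a flat torus $T^*=\R^2/\G_*$. Once such a map is produced, it lifts to a $\c^3$-diffeomorphism $\Psi\colon\R^2\to\R^2$ of universal covers that is automatically equivariant with respect to the homomorphism $\G\to\G_*$ induced on fundamental groups, which gives the required shift property.

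I would obtain the isothermal chart by solving the Beltrami equation. Writing $z=x_1+ix_2$ and unpacking the quadratic form, the metric takes the form $g=\lambda\,|dz+\mu\,d\bar z|^2$ for some $\G$-periodic $\lambda>0$ and $|\mu|<1$ with the same $\c^2$-regularity as $g$. The measurable Riemann mapping theorem of Ahlfors--Bers supplies a global quasiconformal homeomorphism $w\colon\C\to\C$ solving $\dd_{\bar z}w=\mu\,\dd_z w$, and Schauder-type regularity for first-order elliptic systems upgrades $w$ to $\c^3(\R^2)$ because $\mu\in\c^2$; in particular $\det w'\neq 0$ everywhere.

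The key equivariance step is a rigidity argument. For every $\gamma\in\G$, periodicity of $\mu$ implies that $z\mapsto w(z+\gamma)$ is again a global $\c^3$ quasiconformal solution of the same Beltrami equation, so by uniqueness up to post-composition by conformal automorphisms of $\C$ there is an affine map $F_\gamma(w)=a_\gamma w+b_\gamma$ with $w(z+\gamma)=F_\gamma(w(z))$; moreover $\gamma\mapsto F_\gamma$ is a group homomorphism. The $\G$-action on $\R^2$ is free and properly discontinuous with compact quotient, and $w$ is a homeomorphism, so the conjugated action $\{F_\gamma\}$ on $\C$ has the same properties. Any affine map with $a_\gamma\neq 1$ has a fixed point at $b_\gamma/(1-a_\gamma)$, which is forbidden for $\gamma\neq 0$; hence $a_\gamma\equiv 1$, each $F_\gamma$ is the translation $w\mapsto w+b_\gamma$, and $\gamma\mapsto b_\gamma$ is an injective homomorphism $\G\to\R^2$ whose image is a rank-two lattice $\G_*$ (discreteness and rank follow from properly discontinuous cocompact action). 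Post-composing $w$ by a translation so that $w(0)=0$ and setting $b_j^*:=b_{b_j}$, $\Psi:=w$, gives exactly the equivariance claimed.

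Finally, the algebraic identity \eqref{g_relation} is a direct restatement of the Beltrami equation in matrix form: the condition $\dd_{\bar z}w=\mu\,\dd_zw$ says precisely that the pullback of the Euclidean metric by $w$ is pointwise proportional to $g$, and the factor $|\det\Psi'(x)|^{-1}$ appears when one passes from the covariant form $\Psi'^t(\cdot)\Psi'$ to the ``density'' form $\Psi'(\cdot)\Psi'^t$ that matches the way the second-order operator $H$ transforms under a change of variables. A one-line matrix computation then reads off $\omega^2(\Psi(x))$ as a strictly positive $\c^2$-function on $\R^2/\G_*$. I expect the main obstacle to be the rigidity step: without the argument that every $F_\gamma$ must be a pure translation, one could only conclude equivariance with respect to an affine action of $\G$ on $\R^2$, which would be insufficient for the linear shift property used throughout Section \ref{variablemetric}.
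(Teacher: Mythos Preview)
The paper does not prove Proposition \ref{conformal}; it simply cites \cite[Proposition~18]{KuL}. Your route via the measurable Riemann mapping theorem together with the rigidity argument for the conjugated deck transformations is the standard one and is essentially how that reference proceeds, so the overall strategy is sound.

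There is, however, a genuine slip in your final paragraph. Solving $\bar\partial w=\mu\,\partial w$ with $\mu$ the Beltrami coefficient of $g$ yields $\Psi'^{t}\Psi'=c\,g$, which makes the \emph{covariant} pushforward $(\Psi'^{t})^{-1}g(\Psi')^{-1}$ scalar. Relation \eqref{g_relation}, however, asks that $\Psi'g\Psi'^{t}$ be scalar, and this is equivalent to $\Psi'^{t}\Psi'\propto g^{-1}$, not to $g$. Your claim that one ``passes from the covariant form $\Psi'^{t}(\cdot)\Psi'$ to the density form $\Psi'(\cdot)\Psi'^{t}$'' by inserting $|\det\Psi'|^{-1}$ is false: $\Psi'^{t}\Psi'$ and $\Psi'\Psi'^{t}$ share eigenvalues but are proportional only when $\Psi'$ is itself a conformal matrix, i.e.\ only when $\mu=0$. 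The fix is immediate: since in two dimensions $g^{-1}=(\det g)^{-1}JgJ^{t}$ with $J$ the rotation by $\pi/2$, the Beltrami coefficient of $g^{-1}$ is $-\mu$, so one should solve $\bar\partial w=-\mu\,\partial w$ instead (or, equivalently, take $\Psi$ to be the inverse of your $w$). All of your equivariance and rigidity arguments then carry over verbatim.

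A minor regularity remark: Schauder estimates give $w\in C^{k+1,\alpha}$ from $\mu\in C^{k,\alpha}$, so $\mu\in C^{2}$ only yields $w\in C^{2,\alpha}$ for every $\alpha<1$, not $C^{3}$. This does not affect the downstream arguments in Section~\ref{variablemetric}, but the claim as stated needs either a H\"older hypothesis on $g$ or a slight weakening of the conclusion.
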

Let us introduce  some notation. Suppose that the operator $H(g,A,V)$ satisfies
the assumptions of Theorem \ref{main}. Let $\Psi$ be the transformation obtained from Proposition \ref{conformal}. Denote by $T_*\colon \R^2\to \R^2$ the linear transformation defined by $T_*(b_1)=b_1^*$, $T_*(b_2)=b_2^*$. The transformation $T_*$, as well as the map $\Psi$, transforms the lattice $\Gamma$ into $\Gamma_*$. Let also
$$
y=\Psi(x),\quad  A_*(y)=(\Psi'(x)^{-1})^{t}A(x),\quad  V_*(y)=\psi_*(y)^{-2}V(x),\quad \psi_*(y)=|\det \Psi'(x)|^{1/2}.
$$
Let also
$$
\Omega_{\Psi}=\Psi(\Omega),\quad \Omega_*=\{y_1 b_1^*+y_2 b_2^*,\quad y_1,y_2\in [0,1)\}.
$$
Note that both $\Omega_*$ and $\Omega_{\Psi}$ are fundamental domains of $\Gamma_*$, and there is a natural correspondence between $L^2(\Omega_*)$ and $L^2(\Omega_{\Psi})$, as both can be identified with $\R^d/\Gamma_*$.
\begin{lemma}
\label{changevariables}
In the above notation, let $\Phi\colon \L^2(\Omega)\to \L^2(\Omega_*)$ be the unitary operator of change of variables:
$$
u(x)=\psi_*(y)(\Phi u)(y),\quad y=\Psi(x),
$$
where $u$ is considered as an element of $L^2(\Omega_*)$. Then
$$
\Phi H(0;g,A,V)\Phi^{-1}=\psi_* H(0;\omega^2\one,A_*,V_*)\psi_*.
$$
\end{lemma}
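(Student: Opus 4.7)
The plan is to reduce the identity to a direct change-of-variables computation for the closed quadratic form associated with $H(0;g,A,V)$. First I would check that $\Phi$ is unitary from $\L^2(\Omega)$ onto $\L^2(\Omega_*)$ and maps $\H^2_{\per}(\Omega)$ onto $\H^2_{\per}(\Omega_*)$: the defining relation $u(x)=\psi_*(y)(\Phi u)(y)$ together with $\psi_*^2=|\det\Psi'|$ absorbs the Jacobian under $x\mapsto y=\Psi(x)$, while the equivariance of $\Psi$ in Proposition \ref{conformal} and the regularity $\Psi\in \c^3$ transfer $\Gamma$-periodicity and $\H^2$-regularity to the $y$-variable.

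Next I would transform the magnetic derivative. With $J(x):=\Psi'(x)$ and $w(y):=\psi_*(y)(\Phi u)(y)=u(\Psi^{-1}(y))$, the chain rule yields $\nabla_x u=J^t\nabla_y w$, and the definition $A_*=(J^{-1})^t A$ rearranges to $A=J^t A_*$, so
$$
(-i\nabla_x-A(x))u(x)=J(x)^t\l(-i\nabla_y-A_*(y)\r)w(y).
$$
Substituting into the quadratic form and changing variables with Jacobian $|\det J|^{-1}$, the identity \eqref{g_relation} collapses $|\det J|^{-1}JgJ^t$ into $\omega^2\one$, while $V\,|u|^2\,dx$ becomes $V_*\,|w|^2\,dy$ in view of $V_*=\psi_*^{-2}V$. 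Assembling both terms,
$$
\l\<H(0;g,A,V)u,u\r\>_{\L^2(\Omega)}=\l\<H(0;\omega^2\one,A_*,V_*)\psi_*\Phi u,\psi_*\Phi u\r\>_{\L^2(\Omega_*)}.
$$

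Finally, I would interpret the last display as an operator identity: by unitarity of $\Phi$ the left-hand side equals $\l\<\Phi H(0;g,A,V)\Phi^{-1}\Phi u,\Phi u\r\>_{\L^2(\Omega_*)}$, while the right-hand side equals $\l\<\psi_*H(0;\omega^2\one,A_*,V_*)\psi_*\Phi u,\Phi u\r\>_{\L^2(\Omega_*)}$, so polarization on the common core $\H^2_{\per}(\Omega_*)$ yields the claim. The main place requiring care is the bookkeeping at the derivative step: the would-be first-order correction $-i(\nabla_y\psi_*)(\Phi u)$ produced by the product rule is exactly what assembles with $\psi_*(-i\nabla_y-A_*)(\Phi u)$ to form $(-i\nabla_y-A_*)(\psi_*\Phi u)$, which is the structural reason the factors $\psi_*$ appear symmetrically on both sides of $H(0;\omega^2\one,A_*,V_*)$ in the statement.
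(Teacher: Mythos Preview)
Your proof is correct and follows essentially the same route as the paper: both arguments compute the quadratic form of $H(0;g,A,V)$, apply the chain rule together with \eqref{g_relation} under the substitution $y=\Psi(x)$, and read off the operator identity; your write-up is a bit more explicit about the chain rule and the final polarization step, while the paper spells out the intermediate passage from $\Omega_\Psi$ to $\Omega_*$ as fundamental domains of $\Gamma_*$.
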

\begin{proof}
Let $v=\Phi u$, and let us extend it $\Gamma_*$-periodically into $\R^d$. Then, due to \eqref{g_relation} and the change of variable rule, the quadratic form of the left hand side applied to $v$ is equal to
$$
(H(0;g,A,V)\Phi^{-1}v,\Phi^{-1}v)_{\L^2(\Omega)}=(H(0;g,A,V)u,u)_{\L^2(\Omega)}=
$$
$$
=\int_{\Omega}\<g(x)(-i\nabla_x-A(x))u(x),(-i\nabla_x-A(x))u(x)\>\,dx+\int_{\Omega}V(x)|u(x)|^2\,dx
$$
$$
=\int_{\Omega_\Psi}\<\omega^2(y)(-i\nabla_y-A_*(y))\psi_*(y)v(y),(-i\nabla_y-A_*(y))\psi_*(y)v(y)\>\,dy+
\int_{\Omega_\Psi}V_*(y)\psi_*(y)^2|v(y)|^2\,dy=
$$
$$
=\int_{\Omega_*}\<\omega^2(y)(-i\nabla_y-A_*(y))\psi_*(y)v(y),(-i\nabla_y-A_*(y))\psi_*(y)v(y)\>\,dy+
\int_{\Omega_*}V_*(y)\psi_*(y)^2|v(y)|^2\,dy=
$$
$$
=(H(0;\omega^2\one,A_*,V_*)\psi_*v,\psi_*v)_{\L^2(\Omega_*)}.\,\,\qedhere
$$
\end{proof}
\begin{theorem}
Suppose that $k\in \R^2$. Under the assumptions of Theorem $\ref{main}$, the operator $H(k;g,A,V)$ is unitarily equivalent to the operator
\beq
\label{unitariness}
H\l((T_*^{-1}) ^{t}k,\omega^2\psi_*^{2}\one,A_*,\psi_*^{2}V_*+\psi_*^{2}\omega\Delta\omega-\psi_*\omega\Delta(\psi_*\omega)\r)
\eeq
acting in $\L^2(\Omega_*)$, where $\Omega_*\subset \R^2$ is an elementary cell of $\Gamma_*$, and $T_*, \omega,\psi_*, A_*, V_*$ are defined above.
\end{theorem}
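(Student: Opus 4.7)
The plan is to realize the stated unitary equivalence as a composition of three operations: the change of variables $y=\Psi(x)$ together with the Jacobian rescaling (the map $\Phi$ of Lemma \ref{changevariables}), an additional multiplicative gauge transformation, and two applications of Proposition \ref{scalarmetric} to absorb the factor $\psi_*$ into the scalar metric and the potential.

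First I would repeat the quadratic-form computation of Lemma \ref{changevariables} for general $k\in\R^2$. The chain rule $\nabla_x u=(\Psi'(x))^t \nabla_y(\psi_* v)$ for $u(x)=\psi_*(y)v(y)$, together with the metric identity \eqref{g_relation}, yields
$$(-i\nabla_x + k - A(x))u \;=\; (\Psi'(x))^t\bigl[(-i\nabla_y + \tilde k(y) - A_*(y))(\psi_* v)\bigr],\qquad \tilde k(y):=(\Psi'(x)^{-1})^t k,$$
and the quadratic form of $H(k;g,A,V)$ transforms into that of a magnetic Schr\"odinger operator on $\L^2(\Omega_*)$ with metric $\omega^2\one$, magnetic potential $A_*$, potential $V_*$, but with a \emph{non-constant} ``quasimomentum'' $\tilde k(y)$ in place of $k$. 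Since $\Psi'$ is $\Gamma$-periodic in $x$, the field $\tilde k$ is $\Gamma_*$-periodic in $y$.

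The non-routine step is eliminating the $y$-dependence of $\tilde k$. The chain rule identity $(\Psi'(x)^{-1})^t k=\nabla_y\langle \Psi^{-1}(y),k\rangle$ together with the covariance $\Psi(x+\gamma)=\Psi(x)+T_*\gamma$ (for $\gamma\in\Gamma$) shows that $\psi_0(y):=\Psi^{-1}(y)-T_*^{-1}y$ is $\Gamma_*$-periodic, so
$$\tilde k(y)\;=\;(T_*^{-1})^t k \;+\; \nabla_y\varphi_k(y),\qquad \varphi_k(y):=\langle \psi_0(y),k\rangle,$$
with $\varphi_k$ a $\Gamma_*$-periodic $C^3$ function. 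The multiplication by $e^{-i\varphi_k}$ is therefore a unitary operator on $\L^2(\Omega_*)$ that, in the standard gauge-transformation manner, converts $\tilde k(y)$ into the constant $(T_*^{-1})^t k$ inside the covariant derivative while leaving $\omega^2\one$, $A_*$ and $V_*$ untouched. Composing with $\Phi$ gives the intermediate equivalence
$$H(k;g,A,V)\;\sim\;\psi_* H\bigl((T_*^{-1})^t k;\,\omega^2\one,\,A_*,\,V_*\bigr)\psi_*.$$

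To finish, I would apply Proposition \ref{scalarmetric} twice. Identity \eqref{521} rewrites the right-hand side as $(\psi_*\omega)\, H\bigl((T_*^{-1})^t k;\,\one,\,A_*,\,\omega^{-2}V_*+\omega^{-1}\Delta\omega\bigr)(\psi_*\omega)$, and then identity \eqref{522}, applied with $\omega$ replaced by $\psi_*\omega$, reassembles this expression into the operator in \eqref{unitariness}; a direct expansion verifies the potential $\psi_*^2 V_* + \psi_*^2\omega\Delta\omega - \psi_*\omega\Delta(\psi_*\omega)$. I expect the main conceptual obstacle to be the periodicity of the gauge $\varphi_k$, which rests on the clean splitting $\Psi^{-1}(y)=T_*^{-1}y+\psi_0(y)$; this is precisely the ``additional gauge transformation'' ensuring linear dependence of the transformed quasimomentum on $k$ mentioned at the opening of Section \ref{variablemetric}, and it is the reason one cannot simply reuse Lemma \ref{changevariables} at $k\ne 0$.
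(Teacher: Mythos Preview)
Your proposal is correct and follows essentially the same route as the paper, differing only in the order of the two operations: the paper performs the periodic gauge transformation $u(x)=e^{i\alpha(x)}v(x)$ with $\alpha(x)=\langle T_*^{-1}\Psi(x)-x,k\rangle$ on the $x$-side \emph{before} invoking Lemma \ref{changevariables}, whereas you first extend the change-of-variables computation to general $k$ and then gauge away the variable part of $\tilde k$ on the $y$-side via $\varphi_k(y)=\langle \Psi^{-1}(y)-T_*^{-1}y,k\rangle$; since $\varphi_k(\Psi(x))=-\alpha(x)$, the two gauges are the same transformation viewed in the two coordinate systems, and the final application of \eqref{521}--\eqref{522} is identical.
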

\begin{proof}
We will perform the required unitary transformation in several steps.
First, let us note that $H(k;g,A,V)=H(0;g,A-k,V)$. Consider the unitary transformation $u(x)=e^{i\alpha(x)}v(x)$, where $\alpha\in \c^1_{\per}(\Omega)$. Under this transformation, the operator $H(k;g,A,V)$ becomes
$H(0;g,A-k-\nabla\alpha,V)$. Take $\alpha(x)=k(T_*^{-1}\Psi(x)-x)$. This function is $\Gamma$-periodic, and
$$
(\nabla\alpha)(x)=\Psi'(x)^t (T_*^{-1})^t k-k.
$$
Hence, the operator $H(k;g,A,V)$ is unitarily equivalent to $H(0,g,A-\Psi'(x)^t(T_*^{-1})^t k,V)$, which, by Lemma \ref{changevariables},
is equivalent to
$$
\psi_*H(0,\omega^2\one,A_*-(T_*^{-1})^{t}k,V_*)\psi_*=\psi_*H((T_*^{-1})^{t}k,\omega^2\one, A_*,V_*)\psi_*.
$$
Applying \eqref{521} and then \eqref{522}, we ultimately obtain
$$
\psi_*H((T_*^{-1})^{t}k,\omega^2\one, A_*,V_*)\psi_*=\omega\psi_*H((T_*^{-1})^{t}k,\one,A_*,\omega^{-2}V_*+\omega^{-1}\Delta\omega)\omega\psi_*=
$$
$$
=H((T_*^{-1})^{t}k,\omega^2\psi_*^2\one,A_*,\psi_*^2 V_*+\psi_*^2\omega\Delta\omega-\psi_*\omega\Delta(\psi_*\omega)).\,\qedhere
$$
\end{proof}
This completes the proof of Theorem \ref{main}, because its statement has already been established for the operators \eqref{unitariness}, and the operator families $H(k;g,A,V)$ and \eqref{unitariness} have the same band functions up to a linear transformation of $k$.
\section{An example of degenerate band edge in the discrete case}
Consider the discrete Schr\"odinger operator $H = \D + V$ in $l^2(\Z^2)$, where
$$
(\D u)_n = \frac12 \left(u_{n+e_1} + u_{n-e_1} + u_{n+e_2} + u_{n-e_2}\right),
\quad n \in \Z^2,
$$
is the discrete Laplace operator, and $V$ is the operator of multiplication by the potential given by
$$
(Vu)_n = \begin{cases}
v_0 u_n, \quad \text{if}\ (n_1+n_2)\ \text{is even}, \\
v_1 u_n, \quad \text{if}\ (n_1+n_2)\ \text{is odd}; \end{cases}
$$
the real numbers $v_0$ and $v_1$ are fixed. In other words, the lattice is formed by two different types of atoms placed in a chessboard order, and $V$ is periodic with respect to the lattice spanned by $\{2e_1,e_1+e_2\}$.
The corresponding Floquet-Bloch transform
$$
F : l^2 (\Z^2) \to \L^2 (\tilde \O \times \{0;1\})
$$
is given by
$$
(F u) (k; m) = \frac1{\pi\sqrt2} \sum_{n_1+n_2 \equiv m
    (\operatorname{mod} 2)} e^{-ikn} u_n .
$$
Here
$k \in \tilde \O = \{k \in \R^2 : |k_1+k_2| < \pi\}$,
$m=0$ or $m=1$; the operator $F$ is unitary.
It is easy to see that
$$
F H F^* = \int^\oplus_{\tilde \O} H(k)\, dk,
$$
where $H(k)$ is a self-adjoint operator in $\C^2$,
$$
H(k) =
\left( \begin{array}{cc}
v_0 & \cos k_1 + \cos k_2 \\
\cos k_1 + \cos k_2 & v_1 \end{array} \right) .
$$
Eigenvalues of this matrix are
$$
\la_\pm (k) = \frac{v_0+v_1}2 \pm
\sqrt{\left(\frac{v_0-v_1}2\right)^2 + (\cos k_1 + \cos k_2)^2} \ ,
$$
from which it follows that
$$
\min \la_- = \frac{v_0+v_1}2 - \sqrt{\left(\frac{v_0-v_1}2\right)^2 + 4},
\quad \max \la_- = \min (v_0, v_1),
$$
$$
\min \la_+ = \max (v_0, v_1), \quad
\max \la_+ = \frac{v_0+v_1}2 + \sqrt{\left(\frac{v_0-v_1}2\right)^2 + 4} .
$$
So, the spectrum of the operator $H$ consists of two bands
separated by a gap, whenever $v_0 \neq v_1$.
\begin{figure}[H]
    \includegraphics[scale=0.4]{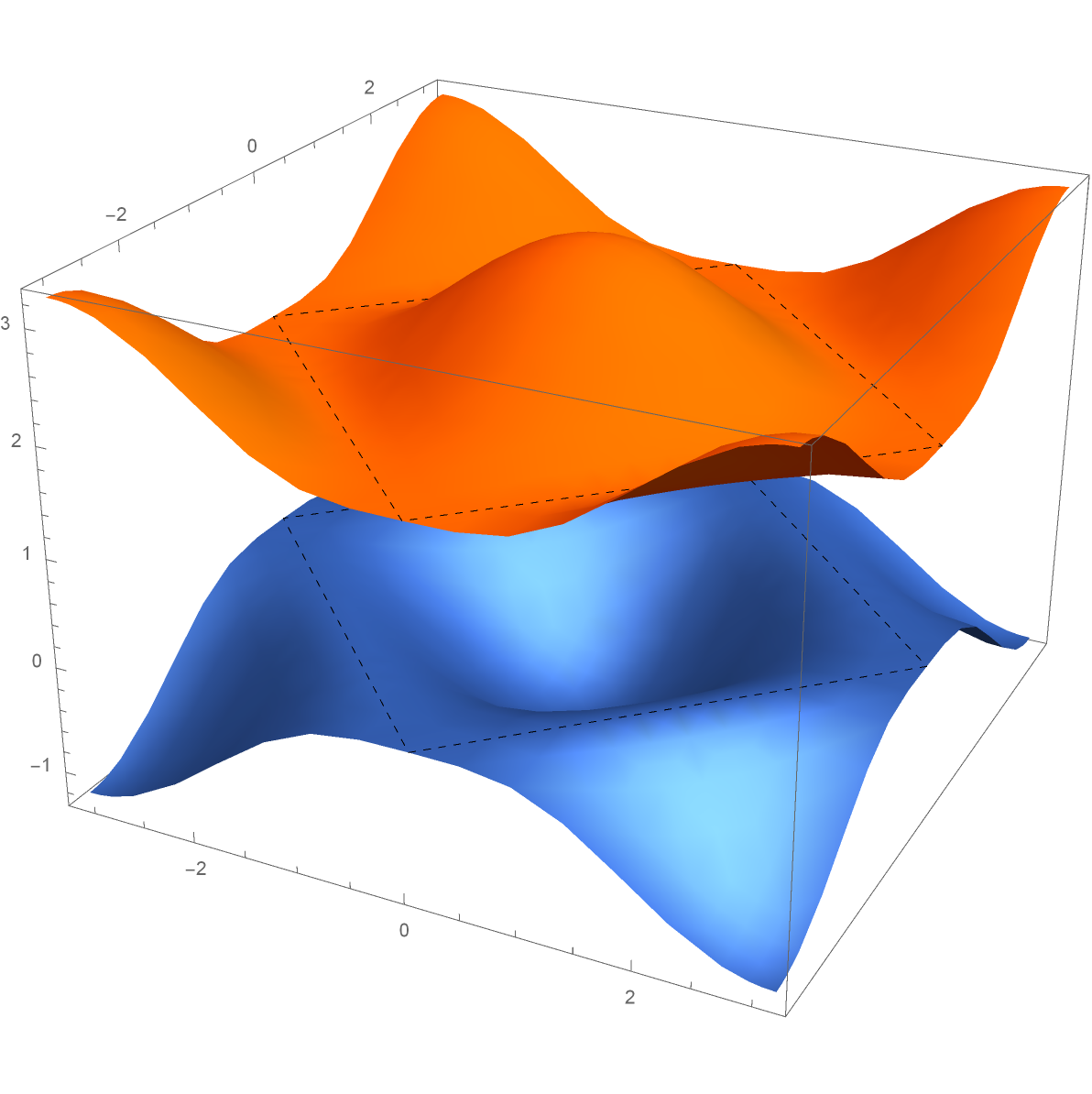}
    \caption{The band functions $\lambda_+(k)$ and $\lambda_-(k)$.}
    \label{fig2}
\end{figure}

The edges of this gap ($v_0$ and $v_1$ respectively)
are attained on the set
$$
\left\{ k \in \R^2 : \cos k_1 + \cos k_2 = 0 \right\}
= \left\{ k \in \R^2 : k_1 \pm k_2 = (2p+1) \pi \right\}_{p\in \Z} ,
$$
which is a countable union of straight lines. Figure \ref{fig2} shows the graphs of $\lambda_{\pm}(\cdot)$ for $v_0=0$, $v_1=2$, with the dashed lines indicating the level sets at the edges of the gap $[0,2]$.
\begin{remark}
	This example seems to be one of the simplest possible 2D diatomic tight binding models. We believe that it should be known to the experts in solid state physics. We could not, however, find it in the literature, which is the reason why we discuss it in detail.
\end{remark}

\end{document}